\documentclass[11pt,a4paper]{article}
\usepackage{amsmath, amssymb, amsthm}
\usepackage{geometry}
\usepackage{graphicx}%
\usepackage{multirow}%
\usepackage{mathrsfs}%
\usepackage[title]{appendix}%
\usepackage{xcolor}%
\usepackage{textcomp}%
\usepackage{manyfoot}%
\usepackage{booktabs}%
\usepackage{algorithm}%
\usepackage{algorithmicx}%
\usepackage{algpseudocode}%
\usepackage{listings}%
\usepackage{physics}
\usepackage[numbers]{natbib}
\usepackage{bm}
\usepackage{esint}
\numberwithin{equation}{section}
\usepackage[colorlinks=true, , allcolors=blue]{hyperref}
\usepackage{soul}
\newtheorem{theorem}{Theorem}%
\newtheorem{proposition}[theorem]{Proposition}%
\newtheorem{cor}[theorem]{Corollary}

\newtheorem{remark}{Remark}%

\numberwithin{theorem}{section}
\numberwithin{remark}{section}
\numberwithin{definition}{section}

\usepackage{authblk}

\title{\textbf{Two-Dimensional $\beta$-plane Turbulence: Dual Cascade and Zonal Jets}}

\author[1]{Yuri Cacchiò}
\author[2]{Amirali Hannani}
\author[3]{Gigliola Staffilani}

\affil[1]{Faculty of Mathematics, University of Vienna, Oskar-Morgenstern-Platz 1, 1090 Vienna, Austria}
\affil[2]{Section of Mathematics, University of Geneva, 1205 Geneva, Switzerland.}
\affil[3]{Department of Mathematics, Massachusetts Institute of Technology, 77 Massachusetts Ave, 02139-4307, Cambridge, USA}

\date{\vspace{-5mm}\scriptsize \textbf{E-mails:} \href{mailto:yuri.cacchio@univie.ac.at}{yuri.cacchio@univie.ac.at}, \href{mailto:Amirali.Hannani@unige.ch}{amirali.hannani@unige.ch}, \href{mailto:gigliola@math.mit.edu}{gigliola@math.mit.edu}}

\begin{document}

\maketitle

\begin{abstract}
We derive an exact and novel expression for an averaged two-point correlation function in the statistically stationary, forced-dissipative two-dimensional Navier–Stokes equations subject to the Coriolis force under the $\beta$-plane approximation. This identity is related to the so-called geostrophic balance: it connects the effect of the Coriolis force to the pressure gradient through a two-point correlation function.

Additionally, inspired by \cite{bed2D}, we provide  sufficient conditions under which the asymptotics of the averaged third-order structure function at large spatial scales follow the universal third-order law of two-dimensional turbulence in the absence of the Coriolis force. This complements our previous result \cite{YuriAmiraliGigliola} on small spatial scales.

Together, our results provide a clear picture of the role of the Coriolis force in $\beta$-plane turbulence. On the one hand, the spherically averaged rates of enstrophy and energy transfer are not affected by the Coriolis force. On the other hand, the Coriolis force contributes to anisotropic large-scale organization by altering the spatial distribution of energy and promoting the formation of zonal structures.

The proof relies on a new formulation of the Kármán–Howarth–Monin relation. For the geostrophic balance, we use a novel antisymmetric projection of the KHM relation under which only the pressure and Coriolis terms survive. For the cascade laws, we show that the Coriolis contribution to the averaged classical KHM relation vanishes identically at any scale.

\vspace{1em}
\noindent \textbf{Keywords:} Two-dimensional $\beta$-plane turbulence, Zonal jets, Antisymmetric Kármán-Howarth-Monin equation, Stochastic Navier-Stokes equations.
\end{abstract}

\tableofcontents

\section{Introduction}
Geophysical flows such as oceanic currents and atmospheric jets are often described as two-dimensional flows in a rotating frame \cite{pedlosky,vallis1}. A minimal mathematical model for these systems is given by the two-dimensional incompressible Navier-Stokes equations subject to stochastic forcing and dissipation, where planetary rotation is taken into account by a term corresponding to the Coriolis force. 

In the present paper, the velocity field $u(t,x)=(u^1(t,x),u^2(t,x)) \in \mathbb{R}_{\geq 0} \times \mathcal{D}$ and the pressure $p(t,x)$ satisfy the following system
\begin{equation}\label{eq:NS_intro}
\left\{
    \begin{array}{rl}
    \partial_t u + (u \cdot \nabla)u + f(x) u^\perp &= \nu \Delta u - \alpha u - \nabla p + \varphi, \\
    \nabla \cdot u &= 0.
\end{array}
\right.
\end{equation}
Here, $u^\perp = (-u^2, u^1)$, $\nu > 0$ is the kinematic viscosity, $\alpha > 0$ is the large-scale Ekman damping coefficient, and $\varphi$ represents the stochastic forcing. The Coriolis parameter $f(x)$ varies linearly with the meridional coordinate $x^2$ as $f(x) = f_0 + \beta x^2$ under the so-called $\beta$-plane approximation. The constants $f_0$ and $\beta$ are fixed parameters defined by the reference latitude.

The Coriolis parameter intrinsically breaks symmetries, preventing the use of standard doubly-periodic domains \cite{YuriAmiraliGigliola,yuriamiraligigliolaWP, salmon}. Therefore, system \eqref{eq:NS_intro} is posed on a periodic channel $\mathcal{D} = \mathbb{T} \times I$, where $\mathbb{T}=[0, L)$ represents the periodic zonal domain and $I = [a,b]$ is the bounded meridional interval. We impose periodic boundary conditions in the zonal direction ($x^1$) and homogeneous Dirichlet (no-slip) boundary conditions on the meridional boundaries ($x^2 \in \{a,b\}$):
\begin{equation}\label{eq:boundary_conditions}
    u(t, x^1+L, x^2) = u(t, x^1, x^2), \qquad u(t, x^1, a) = u(t, x^1, b) = 0.
\end{equation}
Furthermore, we assume that the stochastic process is spatially regular and white-in-time as in \cite{bed2D, YuriAmiraliGigliola,yuriamiraligigliolaWP}, formulated as
\begin{equation}\label{eq:stochasticprocess}
    \varphi(t,x) = \frac{\partial}{\partial t}\zeta(t,x),\ \ \ \ \zeta(t,x) = \sum_{j=1}^{\infty} b_j \beta_j(t) e_j(x),\ \ \ \ t \geq 0,
\end{equation}
where $\{e_j\}$ is a divergence-free orthonormal basis in $H$ (the completion of divergence-free smooth functions satisfying the boundary conditions in $L^2(\mathcal{D})$) (see Section \ref{sec:setting}), and $\{\beta_j\}$ is a sequence of independent standard Brownian motions.
These Brownian motions are defined on a complete probability space $(\Omega,\mathcal{F},\mathbb{P})$ with a filtration $\mathcal{G}_t$, $t\geq0$, and the $\sigma$-algebras $\mathcal{G}_t$ are completed with respect to $(\mathcal{F},\mathbb{P})$, that is, $\mathcal{G}_t$ contains all $\mathbb{P}$-null sets $A\in \mathcal{F}$. 
The coefficients $b_j$ are constant real numbers such that
\begin{align} 
    \varepsilon &= \frac{1}{2}\sum_{j=1}^\infty b_j^2 < \infty, \label{eq:intro_energy}\\
    \eta &= \frac{1}{2}\sum_{j=1}^\infty b_j^2 \fint_{\mathcal{D}} |\nabla \times e_j|^2 \, dx < \infty,  \label{eq:intro_enstrophyavg}
\end{align}
where $\fint_\mathcal{D}  = \frac{1}{|\mathcal{D}|}\int_\mathcal{D} $ denotes the normalized spatial average over the domain.

The constants $\varepsilon$ and $\eta$ denote the average energy and enstrophy input per unit time per unit area, respectively. Notice that these energy and enstrophy injection rates depend only on the noise and are independent of the viscosity $\nu$ and the drag $\alpha$.

Two-dimensional turbulence exhibits the simultaneous transfer of two conserved quantities: the kinetic energy $E$ and the enstrophy $Z$, defined respectively as
\begin{equation}\label{eq:energy_enstrophy_def}
    E(u) = \frac{1}{2} \fint_{\mathcal{D}} |u|^2 \, dx, \qquad Z(u) = \frac{1}{2} \fint_{\mathcal{D}} |\omega|^2 \, dx,
\end{equation}
where $\omega = \nabla^\perp \cdot u = \partial_1 u^2 - \partial_2 u^1$ is the scalar vorticity.

Throughout the paper, unless otherwise stated, all $L^2$ norms are understood with respect to this normalized spatial average (see \eqref{eq:norms}).

This system exhibits a wide range of physical phenomena such as small-scale eddies, two-dimensional turbulence, and the formation of large-scale organized structures \cite{boffetta, pedlosky, vallis1}. In this paper, we are interested in two of the most prominent phenomena associated with two-dimensional rotating flows: \textit{geostrophic balance} and \textit{two-dimensional turbulence}. In the following, we give a brief overview of these phenomena and then explain our results.

\subsection{Geostrophic balance}
For the system \eqref{eq:NS_intro}, we derive an exact, \textit{novel} statistical balance connecting the Coriolis force to the pressure gradient via two-point correlation functions.

\subsubsection{Definition}
Geostrophic balance states that for large-scale flows in a rotating frame, the Coriolis term dominates the inertial advective terms. This regime occurs when the non-dimensional Rossby number is sufficiently small (we refer the reader to Appendix \ref{app:heuristics} for a detailed dimensional analysis). Consequently, the main balance in the horizontal plane is established between the Coriolis force and the pressure gradient: 
\begin{equation} \label{eq: Geo bal sch 1}
    f(x) u^{\perp} \approx -\nabla p.
\end{equation}
Mathematically, this relation represents the leading-order approximation of the velocity equation \eqref{eq:NS_intro} in the asymptotic limit of large scales.

\subsubsection{Two-point correlation: main contribution}

Here, we are mainly interested in the interplay between the geostrophic balance and the two-point correlation function of the stationary flow. In \cite{YuriAmiraliGigliola} we proved that the model \eqref{eq:NS_intro} admits at least one stationary measure; we denote expectation with respect to this measure by $\mathbb{E}$. Given a separation vector $y \in \mathbb{R}^2$ and $x \in \mathcal{D}$, we want to investigate how the two-point correlation function 
\begin{equation}
    \mathbb{E}(u(x) \cdot u(x+y))
\end{equation}
behaves and whether the geostrophic balance can be observed via this quantity.

We begin with the following heuristic argument: fix a separation vector $y\in \mathbb{R}^2$ with $|y|=l$ and direction $n=y/|y|=(n^1,n^2)$. Assume \eqref{eq: Geo bal sch 1} holds, and consider the quantity $\nabla p(x) \cdot u^{\perp}(x+ln) - u^{\perp}(x) \cdot \nabla p(x+ ln)$. Replacing $\nabla p$ in this expression with $-f(x) u^{\perp}$, using the fact that the shift in the Coriolis parameter is $f(x+ln)-f(x)=\beta l n^2$,and then taking the expectation yields: 
\begin{equation} \label{eq: two point function intro heur}
    \beta l n^2\mathbb{E}(u(x) \cdot u(x+ln)) = \mathbb{E}\big(\nabla p(x) \cdot u^{\perp}(x+ln) - u^{\perp}(x) \cdot \nabla p(x+ ln)\big).
\end{equation}

Our first contribution in this paper is to derive an ``averaged" version of the above identity starting from \eqref{eq:NS_intro}.  
More precisely, we denote the spatial and spherical average by $\langle \cdot \rangle$, i.e., for any function $g(x,n)$, $\langle g(x,n) \rangle = \fint_{\mathcal{D}} \fint_{\mathbb{S}} g(x,n) \, dn \, dx$, where $\fint$ is the normalized integral. Then, we derive and prove the following exact identity from \eqref{eq:NS_intro} (see Theorem \ref{thm:antisymmetric_balance} for the precise statement): 
\begin{equation} \label{eq: balance sch 2}
      \mathbb{E} \langle \beta l (n^2)^2 (u(x) \cdot u(x+ln)) \rangle =  \mathbb{E}\langle n^2 \big(\nabla p(x) \cdot u^{\perp}(x+ln) - u^{\perp}(x) \cdot \nabla p(x+ln) \big)\rangle.
\end{equation}

\subsubsection{Remarks}
First, let us emphasize that identity \eqref{eq: balance sch 2} is a \textit{novel} balance pertaining to the geostrophic regime, derived from first principles (i.e., directly from \eqref{eq:NS_intro}) without any extra assumptions or statistical closures. While exact two-point relations in rotating turbulence do exist, such as KHM equations for potential vorticity fluxes \cite{kurien_PV} or exact energy laws \cite{galtier}, our antisymmetric projection isolates the exact $\beta$-pressure balance. Furthermore, unlike classical relations that assume geostrophic balance \textit{a priori} to link wind and pressure correlations (e.g., \cite{buell}), our identity requires no macroscopic approximations.

In the previous section, we argued how to deduce \eqref{eq: balance sch 2} from the geostrophic balance. However, we do not expect the converse to be strictly true.

To relate in the reverse direction, we refer to Appendix \ref{app:heuristics}. In this appendix, we provide a heuristic scaling argument showing that from \eqref{eq: balance sch 2} one is led to conclude that the flow becomes predominantly zonal, i.e., the velocity is predominantly in the horizontal direction. While the formation of zonal jets is classically studied via one-point zonally averaged equations (such as the Taylor identity \cite{vallis1}) or quasi-linear statistical frameworks (like CE2/S3T \cite{Cost2, Cost}), our identity provides a fully nonlinear, exact two-point mathematical signature of this phenomenon. This remains consistent with the predictions from other physical heuristics, numerical simulations, and observational data \cite{Danilov, galerpin1, rhines, vallis4}.

\subsection{Turbulence}
The investigation of \textit{three-dimensional} stationary forced-dissipative turbulence dates back at least to Kolmogorov’s work in the early 1940s \cite{kolmogorov2,kolmogorov1,kolmogorov3}, where he predicted the direct cascade of energy (i.e., from small to large wavenumbers). Energy injected at large scales by an external force is transferred towards smaller scales, where it is dissipated by viscosity. 

Kolmogorov's argument relies on the fact that only energy is conserved. In \emph{two dimensions}, the phenomenology of statistically stationary, forced-dissipative flows changes drastically due to the simultaneous conservation of two quantities: energy and enstrophy. This additional conservation law leads to the more complicated \emph{dual cascade} picture, first argued by Fjørtoft \cite{Fjrtoft}, and later by Batchelor and Kraichnan \cite{Batchelor2,Kraichnan,Kraichnan2}. 

In this regime, we observe an inverse energy cascade from small spatial scales (large wavenumbers) to large spatial scales (small wavenumbers), and a direct enstrophy cascade from large spatial scales to small spatial scales. 
  
Let us provide a more precise qualitative picture of this dual cascade. We denote the dissipation, energy injection, and friction scales by $l_\nu$, $l_I$, and $l_\alpha$, respectively, with the scale separation $l_\nu \ll l_I \ll l_\alpha$. Here, $l_\nu$ is the scale where viscosity predominantly acts, $l_I$ is the scale where energy is injected, and $l_\alpha$ is the scale where the large-scale dissipation (drag) is dominant. The dual cascade picture predicts that energy is transferred from the injection scale to the friction scale and damped by friction (\emph{inverse cascade}). On the other hand, enstrophy cascades from the injection scale to the dissipation scale and is dissipated by viscosity (\emph{direct cascade}).
 
In this classical picture, the Coriolis force is absent. One of the main purposes of the present paper is to understand which parts of this phenomenology survive in the presence of the \(\beta\)-plane Coriolis force.

\subsubsection{Quantitative Characterization}

A quantitative characterization of the dual cascade is provided via either of the following quantities \cite{frisch}:
\begin{enumerate}
    \item The ensemble average of the energy spectrum $E(k)$:
\begin{equation}
    E(k) \sim |k|\mathbb{E} (|\hat{u}(k)|^2),
\end{equation}
where $\hat{u}(k)$ denotes the velocity Fourier coefficient, and the expectation is taken with respect to the stationary measure.

\item The $n$-th order structure functions: For any separation vector $y \in \mathbb{R}^2$, let $\delta_{y}u(x):=u(x+y)-u(x)$. The $n$-th order structure function is defined as
\begin{equation}
    S_n(y) := \mathbb{E}(|\delta_yu(x)|^n),
\end{equation}
where the dependence on $x$ is suppressed on the left-hand side (typically by assuming spatial homogeneity). 
\end{enumerate}

Assuming isotropy and homogeneity in three dimensions, the celebrated Kolmogorov 4/5-law (the quantitative formulation of the direct cascade) predicts that the third-order longitudinal structure function behaves as
\begin{equation}\label{thirdorderenergy}
    S_3^{\parallel}(l) := \mathbb{E}\left[\left(\delta_y u \cdot \frac{y}{|y|}\right)^3\right] \sim -\frac{4}{5} \varepsilon l, \quad l_\nu \ll l \ll l_I,
\end{equation}
where $l = |y|$ is the separation scale, $l_\nu$ is the dissipation scale, $l_I$ is the energy injection scale, and $\varepsilon$ denotes the average energy injection rate.

In two dimensions, at \textit{small scales}, the direct cascade of enstrophy can be characterized by the following mixed velocity-vorticity structure function as predicted by Eyink \cite{eyink},
\begin{equation}\label{eq: velocity-vorticity str}
    S_3^{(1)}(l)=\mathbb{E}\left(|\delta_{ln} \omega|^2 (\delta_{ln} u\cdot n) \right)\sim -2\eta l, 
 \ \ \ l_\nu \ll l \ll l_I,
\end{equation}
which was originally derived in \cite{yaglom} for passive scalar turbulence. Here, $\delta_{ln}\omega:= \omega(x+ln)-\omega(x)$ for a separation vector $y=ln$, with $n=y/|y| \in \mathbb{S}$. This law describes the scale-space flux of enstrophy, where the negative sign indicates a downscale (direct) transfer. 

Yet another way to look at the flux at small scales is by measuring the total velocity-increment energy \cite{bed2D},
\begin{equation}\label{eq: velocity-velocity small scale}
    S_3^{(2)}(l)=\mathbb{E}\left(|\delta_{ln} u|^2 (\delta_{ln} u\cdot n)\right)\sim \frac{1}{4}\eta l^3, 
 \ \ \ l_\nu \ll l \ll l_I.
\end{equation}
At \textit{large scales}, the inverse cascade of energy is characterized by \cite{bed2D},
\begin{equation}\label{eq: velocity-velocity large scale}
    S_3^{(3)}(l)=\mathbb{E}\left(|\delta_{ln} u|^2 (\delta_{ln} u\cdot n)\right)\sim 2\varepsilon l, 
 \ \ \ l_I \ll l \ll l_{\alpha},
\end{equation}
where $S^{(3)}_3$ is the flux of kinetic energy through the separation scale $l$.

We note that the longitudinal versions of these relations can be considered as the two-dimensional analogues of the 4/5-law \eqref{thirdorderenergy}. They were conjectured only in 1999 by Bernard, Lindborg, and Yakhot \cite{bernard,lindborg,yakhot},
\begin{subequations}
\begin{align}
S_3^{\parallel}(l)&=\mathbb{E}\left[\left(\delta_{ln} u\cdot n \right)^3\right]\sim \frac{1}{8}\eta l^3, \ \ \
 l_\nu \ll l \ll l_I, \label{thirdorederenstrophy2d}\\
S_3^{\parallel}(l)&=\mathbb{E}\left[\left(\delta_{ln} u\cdot n \right)^3\right]\sim \frac{3}{2}\varepsilon l,  \ \ \ l_I \ll l \ll l_\alpha, \label{thirdorederenergy2d}
\end{align}
\end{subequations}
where, as before, $\varepsilon$ and $\eta$ are the energy and enstrophy injection rates.

The spectral counterpart of the dual cascade picture has also been predicted over the aforementioned scales \cite{Batchelor2,Kraichnan},
\begin{subequations}
\begin{align}
    |k|&\mathbb{E}|\hat{u}(k)|^2\sim\varepsilon^{2/3}|k|^{-5/3},\ \ \ l^{-1}_{\alpha}\ll|k|\ll l^{-1}_I,\label{energyspectrum}\\
    |k|&\mathbb{E}|\hat{u}(k)|^2\sim\eta^{2/3}|k|^{-3},\ \ \ \ l^{-1}_{I}\ll|k|\ll l^{-1}_\nu.\label{enstrophyspectrum}
\end{align}
\end{subequations}
While this picture is well understood physically and numerically \cite{boffetta,vallis1}, mathematically rigorous proofs have been developed only recently \cite{bed3D,bed2D,dudley,pappa,papathanasiou,buler} (see also \cite{Novak} and references therein for the deterministic counterpart).  

Among the mathematical results above, let us highlight the work of Bedrossian et al. \cite{bed2D}. In this work, the authors introduce a weak sufficient condition, referred to as Weak Anomalous Dissipation (WAD) (see equations \eqref{eq:WAD_energy} and \eqref{eq:WAD_enstrophy} for the precise statement), to rigorously prove an averaged version of \eqref{eq: velocity-vorticity str}, \eqref{eq: velocity-velocity small scale}, and \eqref{eq: velocity-velocity large scale}. By ``averaged version," we mean that instead of assuming spatial homogeneity and isotropy, they consider the spatially and spherically averaged third-order structure functions: 
\begin{equation}
    \overline{S}^{(i)}_3  :=  \langle S^{(i)}_3 \rangle, 
\end{equation}
for $i \in \{1,2,3\}$, where we recall the definitions of $S_3^{(i)}$ from above, and the averaging operator is defined as $\langle g \rangle := \fint_{\mathbb{S}}  \fint_{\mathcal{D}} g(x,n) \, dx \, dn$.

\subsubsection{Effect of the Coriolis force}

So far, we have not considered the effect of the Coriolis force. In the physics literature, the cascade behavior in the presence of the Coriolis force is less understood. We are not aware of any conjecture pertaining to the third-order structure functions in the presence of the Coriolis force. However, regarding the spectral behavior, it is conjectured \cite{chekhlov, Cost2, Cope, galerpin1, huang2, rhines, vallis1} 
 that the presence of the Coriolis force modifies the spectral distribution at large scales, dividing the spectrum into three distinct ranges
\begin{subequations}
\begin{align}
    |k|&\mathbb{E}|\hat{u}(k)|^2 \sim \beta^2|k|^{-5}, \quad\quad l^{-1}_{\alpha} \ll |k| \ll l^{-1}_{R}, \label{energyspectrumCor_zonal}\\
    |k|&\mathbb{E}|\hat{u}(k)|^2 \sim \varepsilon^{2/3}|k|^{-5/3}, \quad l^{-1}_{R} \ll |k| \ll l^{-1}_I, \label{energyspectrumCor_iso}\\
    |k|&\mathbb{E}|\hat{u}(k)|^2 \sim \eta^{2/3}|k|^{-3}, \quad\quad l^{-1}_{I} \ll |k| \ll l^{-1}_\nu. \label{enstrophyspectrumCor}
\end{align}
\end{subequations}
Here, $l_{R}$ represents the transition scale at which the inverse energy cascade first feels the $\beta$-effect (see \eqref{eq:balance_beta_eps} for the  derivation in our framework). Notice that, unlike in the standard isotropic turbulence picture presented in the previous section, the friction scale $l_{\alpha}=l_{\alpha}(\beta)$ is also modified by the planetary rotation.

Comparing the above relations with \eqref{energyspectrum} and \eqref{enstrophyspectrum} paints a clear picture: the Coriolis force does not alter the qualitative behavior of the dual cascade at small and intermediate scales. Indeed, both the direct enstrophy cascade and the initial $-5/3$ inverse energy cascade remain completely unaffected. It is only at the largest scales ($|k| \ll l_R^{-1}$) that the Coriolis force modifies the energy distribution.

In \cite{YuriAmiraliGigliola}, we proved a result confirming the above picture at small scales. Our result pertains to the third-order structure functions: we provided a sufficient condition under which the behavior of the averaged third-order structure functions \eqref{eq: velocity-vorticity str} and \eqref{eq: velocity-velocity small scale} in the presence of the Coriolis force remains identical to the case where the Coriolis force is absent.

\subsubsection{Main contribution}

We prove, assuming the weak anomalous dissipation (WAD) hypothesis, that the behavior of the averaged third-order structure function $\overline{S}^{(3)}_3$ remains unaffected by the Coriolis force. The precise statement of the result is given in Theorem \ref{thm:inverse_cascade}. We give an informal version here.

Consider equation \eqref{eq:NS_intro}, and denote the expectation with respect to the stationary measure by $\mathbb{E}$. The existence of such a measure is proved in \cite{yuriamiraligigliolaWP}. 
The weak anomalous dissipation hypothesis is given in \eqref{eq:WAD_energy} and \eqref{eq:WAD_enstrophy}: physically, it assumes that all the injected energy will be dissipated by the large-scale friction term and all the injected enstrophy will be dissipated by the viscosity, uniformly in $\nu, \alpha$:
\begin{equation*}
    \lim_{\nu \to 0} \sup_{\alpha \in (0,1)} \alpha \mathbb{E}\|u_{\nu,\alpha} \|^2_{L^2(\mathcal{D})}=\varepsilon, \quad  \lim_{\alpha \to 0} \sup_{\nu\in(0,1)} \nu \mathbb{E}\|\nabla \omega_{\nu,\alpha} \|^2_{L^2(\mathcal{D})}=\eta.
\end{equation*}

Then, our main result, Theorem \ref{thm:inverse_cascade}, states that at large scales ($l_I \ll l \ll l_{\alpha}$) we have
\begin{equation} \label{third order def intro}
    \frac{1}{|\mathcal{D}_\alpha|}\,\mathbb{E} \fint_{\mathbb{S}}
    \int_{\mathbb{T}_{L(\alpha)} \times \mathbb{R}}
    |\delta_{ln}u(x)|^2 (\delta_{ln}u(x) \cdot n) \, dx \, dn \sim 2 \varepsilon l,
\end{equation}
where $\varepsilon$ is the average injected energy.

The above asymptotic should be understood for large scales $l_{I}\ll l \ll l_{\alpha}$ in the limit $\alpha,\nu \to 0$. The notion of \textit{large scale} should be understood in the following limiting regime, similar to \cite{bed2D}: we fix a separation scale $l \in [l_I, l_{\alpha}]$, and to observe the large scales ($l \to \infty$) we take the limit $\nu,\alpha \to 0$ such that $l_{\alpha} \to \infty$ at an appropriate rate. Finally, we take the limit $l_{I} \to \infty$.

Notice that, since the transition scale $l_R \sim (\varepsilon/\beta^3)^{1/5}$ (see \eqref{eq:balance_beta_eps}) is independent of the friction parameter $\alpha$, taking the asymptotic limit $l_\alpha \to \infty$ (i.e. $\alpha\to 0$) implies that the inertial range $[l_I, l_\alpha]$ crosses the transition scale $l_R$. Therefore, Theorem \ref{thm:inverse_cascade} establishes that the spherically averaged energy flux remains completely unaffected by the planetary rotation even inside the zonostrophic regime ($l > l_R$).

Let us emphasize that in this regime, since the size of the separation vector grows to infinity, the domain size must also grow to infinity at an appropriate rate (see Section \ref{sec:main_cascade}). 

We also refer to Section \ref{sec: boundary layer} for a detailed discussion on the definition of \ref{third order def intro}; in particular, in Section \ref{rem:quadratic_vs_cubic} we explain the different choice of ``averaging" domains and their relevance.

Besides the above results, we also improve our previous result from \cite{YuriAmiraliGigliola}. In \cite{YuriAmiraliGigliola}, we proved the enstrophy cascade at small scales provided that $\mathbb{E}\|\omega \|_{L^2}^2$ remains bounded uniformly in $\alpha, \nu>0$. This assumption is stronger than WAD. In this paper, we recover the result of \cite{YuriAmiraliGigliola}, i.e., the averaged versions of \eqref{eq: velocity-vorticity str} and \eqref{eq: velocity-velocity small scale}, assuming WAD.

The precise statement is given in Theorem \ref{thm:enstrophy_cascade}.

\subsubsection{Interpretation of the result}

At first sight, our result seems consistent with the spectral prediction at intermediate scale \eqref{energyspectrumCor_iso} but inconsistent with the spectral prediction at large scales \eqref{energyspectrumCor_zonal}. Let us explain why this is not the case: our result indicates that the averaged \textit{isotropic} and \textit{homogeneous} rate of energy transfer between different spatial scales is not affected by the Coriolis force. This is somewhat expected since the Coriolis force does not perform any local work $(f u^{\perp} \cdot u=0)$.  

The main effect of the Coriolis force is to create zonal structures and reorganize the spatial distribution of the energy, without changing the net average rate of energy transfer. In fact, our result on the geostrophic balance \eqref{eq: balance sch 2} provides a rigorous representation of this phenomenon. 

Let us give a physical heuristic on why \eqref{eq: balance sch 2} confirms the above picture. For an isotropic flow, one can verify that the right-hand side of \eqref{eq: balance sch 2} (the pressure gradient correlation) becomes identically zero for any $l \ge 0$. Furthermore, for a flow on a standard doubly-periodic domain in the absence of the $\beta$-effect, the right-hand side of \eqref{eq: balance sch 2} is also zero, independent of the isotropy assumption. 

One might therefore suspect that, even in the presence of the Coriolis force, this term vanishes and \eqref{eq: balance sch 2} trivially reduces to $0=0$. Let us argue mathematically why this is not the case. If we fix $\nu > 0$ and take the limit $\alpha \to 0$, which is the relevant regime to isolate large-scale dynamics (see, e.g., Theorem 1.21 of \cite{bed2D}), the energy and enstrophy balances \eqref{eq:energy_balance}-\eqref{eq:enstrophy_balance} imply that $\mathbb{E}\|u\|_{L^2}^2$ scales as $\varepsilon/\alpha$. Consequently, the left-hand side of \eqref{eq: balance sch 2} behaves asymptotically as $\varepsilon \beta l / \alpha$ (see Appendix \ref{appB} for further details). This demonstrates that the term is strictly positive and far from a trivial identity.

Moreover, from a physical perspective, as the separation scale $l$ increases and approaches the large-scale inertial range, the left-hand side of \eqref{eq: balance sch 2} grows due to the $\beta l$ weight. As we heuristically argue via dimensional scaling in Appendix \ref{app:heuristics}, the unbounded growth of the $\beta$-term must be compensated by the pressure correlation. This compensation forces a structural reorganization of the flow, providing a strong mathematical signature of how the Coriolis force induces the anisotropic zonal structures.

If one is interested in observing the breaking of isotropy due to the Coriolis force directly within the third-order structure functions, the correct observable is the longitudinal third-order structure function $S_3^{\parallel}$ introduced in \eqref{thirdorederenstrophy2d}-\eqref{thirdorederenergy2d}. We expect the behavior of $S_3^{\parallel}$ to change drastically in the presence of the Coriolis force. However, we are currently unable to prove this. In the next section, we provide a general argument illustrating why deriving such statements is out of reach for current methods.

\subsection{Overview of the Proofs}

The first important ingredients of our proofs are the energy and enstrophy balances, which are given in \eqref{eq:energy_balance} and \eqref{eq:enstrophy_balance}. They can be derived directly from \eqref{eq:NS_intro} using Itô's formula. Notably, these balances are identical to the case where the Coriolis force is absent \cite{YuriAmiraliGigliola}. 

Our strategy, following the framework of \cite{bed2D, YuriAmiraliGigliola}, is based on the Kármán-Howarth-Monin (KHM) relations \cite{karman, monin}. Let us briefly explain this approach: for two points $x, x+y \in \mathcal{D}$, we consider the tensor product $u(x) \otimes u(x+y)$. One can compute the time derivative of this quantity based on Itô's formula and the evolution equation \eqref{eq:NS_intro}. Afterwards, taking the expectation of the resulting identity with respect to the stationary measure eliminates the time derivative. This procedure yields an exact statistical identity involving two-by-two matrices. Taking the trace of this resulting tensor equation recovers the standard scalar KHM relation. 

We first explain the proofs of Theorems \ref{thm:enstrophy_cascade} and \ref{thm:inverse_cascade}. The key mechanism is that the expected value of the advective term $(u \cdot \nabla)u$, after simplifications, yields the third-order structure functions. The subsequent step relies on extracting the correct asymptotics of all remaining terms using the stationary balances alongside the WAD hypothesis. These steps remain structurally identical to the non-rotating case. In our previous work \cite{YuriAmiraliGigliola}, we showed that the new trace contribution arising from the Coriolis force is asymptotically small. However, in the present work, we prove that this term, due to the incompressibility and the integration over the domain and the unit sphere, is identically zero (see Section \ref{sec:vanishing}). This provides a definitive indication that the Coriolis force does not affect the \textit{average} isotropic rate of the dual cascade at any scale. 

To formally extract the geostrophic balance and prove Theorem \ref{thm:antisymmetric_balance}, we reconsider the aforementioned matrix equation. Instead of taking the trace, we project the tensor using a weighted antisymmetric sum of its off-diagonal components (see the exact test function in Section \ref{sec:khm_anti}). Such a combination transforms the Coriolis contribution into the weighted two-point function present on the left-hand side of \eqref{eq: balance sch 2}. Moreover, all other terms, except the one involving the pressure gradients, cancel out under this antisymmetric projection. In the standard trace formulation, the pressure term vanishes due to spatial symmetries and incompressibility. Here, it survives and provides the main balancing mechanism. 

Finally, let us comment on why obtaining a generalized law for the longitudinal third-order structure function is highly non-trivial. If one projects the tensor equation to extract the longitudinal flux, the corresponding Coriolis contribution does not vanish. Attempting to control this remaining Coriolis term leads to a complicated sign structure and introduces bounds that diverge in the inviscid limit under standard WAD assumptions (as discussed in Section \ref{sec:main_cascade}), thereby preventing a rigorous asymptotic closure.

\paragraph{Organization of the paper.}
In Section \ref{sec:setting}, we establish the functional framework, recall the well-posedness and regularity of the stationary measure for the stochastic Navier-Stokes equations on the $\beta$-plane, and we introduce the weak Kármán-Howarth-Monin identities. Section \ref{sec:main_results} presents the statements of our main results: the exact cancellation of the spherically averaged Coriolis terms (Section \ref{sec:cancellation_main_r}), the rigorous derivation of the dual cascade under the Weak Anomalous Dissipation framework (Section \ref{sec:main_cascade}), and the novel antisymmetric KHM balance governing the anisotropic dynamics (Section \ref{sec:khm_anti}). Section \ref{sec:proofs} is devoted to the proofs of these theorems. Finally, Appendices \ref{app:heuristics} and \ref{appB} provide a discussion connecting our exact antisymmetric identity to classical geophysical heuristics.

\section{Mathematical Setting and Preliminaries}\label{sec:setting}

In this section, we establish the main framework for the stochastic two-dimensional Navier-Stokes equations under the $\beta$-plane approximation. Moreover, we recall the well-posedness and regularity results for the stationary measure obtained in our previous work \cite{yuriamiraligigliolaWP} (see also \cite{mustafa, Yuri, gallagher2, gallagher1} for further rigorous mathematical results on these equations).

Following the notation introduced in \cite{yuriamiraligigliolaWP}, let $\mathcal{V}$ be the space of smooth, divergence-free test functions
\begin{equation*}
    \mathcal{V} = \big\{ \phi \in C^\infty(\bar{\mathcal{D}}, \mathbb{R}^2) : \nabla \cdot \phi = 0, \ \phi \text{ is periodic in } x^1, \ \phi = 0 \text{ on } \partial\mathcal{D} \big\}.
\end{equation*}
We define the following Hilbert spaces for incompressible fluids as the closures of $\mathcal{V}$ in $L^2(\mathcal{D}, \mathbb{R}^2)$ and $H^1_0(\mathcal{D}, \mathbb{R}^2)$ respectively,
\begin{align}
    H &= \overline{\mathcal{V}}^{L^2(\mathcal{D}, \mathbb{R}^2)}, \\
    V &= \overline{\mathcal{V}}^{H^1_0(\mathcal{D}, \mathbb{R}^2)}.
\end{align}
These spaces are equipped with the normalized norms 
\begin{equation}\label{eq:norms}
    \|u\|_H = \|u\|_{L^2(\mathcal{D})}, \qquad \|u\|_V = \|u\|_{H^1(\mathcal{D})},
\end{equation}
where
\begin{align}
    \|u\|_{L^2(\mathcal{D})}^2&=
    \fint_\mathcal{D} |u(x)|^2\,dx=
    \frac{1}{|\mathcal{D}|}\int_\mathcal{D} |u(x)|^2\,dx,\\
     \|u\|_{H^1(\mathcal{D})}^2&=\fint_{\mathcal D} |u(x)|^2\,dx+\fint_{\mathcal D} |\nabla u(x)|^2\,dx.
\end{align}
Let $P_H : L^2(\mathcal{D}, \mathbb{R}^2) \to H$ denote the Leray-Helmholtz orthogonal projection \cite{Leray1,Leray3,Leray2}. By applying $P_H$ to the governing system \eqref{eq:NS_intro}, we eliminate the pressure gradient and obtain the abstract stochastic evolution equation in $H$,
\begin{equation}\label{eq:abstract_NS}
    du + \big[ \nu A u + B(u,u) + \alpha u + \mathcal{C}(u) \big] dt = d\zeta(t),
\end{equation}
where $A = -P_H \Delta$ is the Stokes operator, $B(u,u) = P_H((u \cdot \nabla)u)$ is the bilinear inertial operator, $\mathcal{C}(u) = P_H(f(x)u^\perp)$ is the Coriolis operator, and $d\zeta(t) = P_H(\varphi) dt$ is the projected stochastic forcing. It is well-known that solving the projected equation \eqref{eq:abstract_NS} in $H$ is equivalent to solving the original system \eqref{eq:NS_intro} for the velocity field (see \cite{yuriamiraligigliolaWP} for more details).

\subsection{Well-Posedness and Stationary Measure}
The inclusion of the Coriolis force on a bounded channel breaks the standard symmetries, posing additional challenges for well-posedness and regularity.  

In what follows, we consider solutions to \eqref{eq:abstract_NS} in the weak sense. We refer the reader to \cite{yuriamiraligigliolaWP} for the precise definition of the stochastic solution and the filtered probability space. We now recall the existence results.

\begin{theorem}[Well-Posedness and Stationary Measure \cite{yuriamiraligigliolaWP}]\label{thm:stationary_measure}
For any $\nu > 0$, $\alpha > 0$, and initial datum $u_0 \in H$, the stochastic $\beta$-plane Navier-Stokes equation \eqref{eq:abstract_NS} admits a unique global-in-time solution $u \in C(\mathbb{R}_+; H) \cap L^2_{loc}(\mathbb{R}_+; V)$ in the weak sense, almost surely. Furthermore, the Markov semigroup associated with the solutions admits at least one invariant probability measure $\mu_{\nu,\alpha}$ supported on $V$.
\end{theorem}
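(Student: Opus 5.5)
We plan to follow the classical Galerkin / compactness route for the stochastic Navier–Stokes equations, treating the Coriolis operator $\mathcal{C}$ as a bounded perturbation that performs no work. The central observation is that on the bounded channel $\mathcal{D}=\mathbb{T}\times[a,b]$ the function $f(x)=f_0+\beta x^2$ is bounded. Hence $\mathcal{C}:H\to H$ is a bounded linear operator with $\|\mathcal{C}(u)\|_H\le \|f\|_{L^\infty}\|u\|_H$. Moreover, since $P_H$ is self-adjoint and $u\in H$, we have $(\mathcal{C}(u),u)_H=\fint_{\mathcal D} f\,u^\perp\cdot u\,dx=0$.

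\textbf{Step 1 (a priori estimates).} We first apply Itô's formula to $\|u\|_H^2$ along Galerkin approximations $u_N$. We use the Stokes eigenfunctions, which exist since $A$ has compact inverse on the bounded channel, projected by $\Pi_N$. Because both $(B(u,u),u)=0$ and $(\mathcal{C}(u),u)=0$, we obtain
\[
d\|u_N\|_H^2+2\nu\|\nabla u_N\|^2dt+2\alpha\|u_N\|_H^2dt=\textstyle\sum_{j\le N}b_j^2\,dt+dM_t .
\]
From this, the Burkholder–Davis–Gundy inequality gives bounds for $\mathbb{E}\sup_{t\le T}\|u_N\|_H^p$ and $\mathbb{E}\int_0^T\|u_N\|_V^2$ that are uniform in $N$.

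\textbf{Step 2 (existence and uniqueness).} Next we pass to the limit $N\to\infty$. Tightness in $L^2(0,T;H)\cap C([0,T];V')$ follows from the bounds above together with time-regularity estimates on the drift, since $\mathcal C$ is harmless there. Skorokhod's representation then yields a martingale solution; the nonlinear term converges by strong $L^2$ convergence, and the Coriolis term converges trivially because it is linear and bounded. Pathwise uniqueness is the standard 2D argument. For $w=u-v$ we have
\[
\tfrac12\tfrac{d}{dt}\|w\|^2+\nu\|\nabla w\|^2+\alpha\|w\|^2=-(B(w,v),w),
\]
where the Coriolis contribution $(\mathcal C w,w)$ vanishes. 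Ladyzhenskaya's inequality bounds the right-hand side by $\tfrac{\nu}{2}\|\nabla w\|^2+C\nu^{-1}\|v\|^2\|\nabla v\|^2\|w\|^2$, and Grönwall's lemma applied pathwise gives $w=0$. The Gyöngy–Krylov / Yamada–Watanabe argument then upgrades this to a probabilistically strong solution in $C(\mathbb R_+;H)\cap L^2_{loc}(\mathbb R_+;V)$. Continuity in $H$ follows from the Lions–Magenes lemma in its Itô version.

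\textbf{Step 3 (invariant measure).} To construct the invariant measure we use Krylov–Bogoliubov. Integrating the energy identity from $u_0=0$ gives
\[
\frac1T\int_0^T\mathbb{E}\|\nabla u(t)\|^2dt\le \frac{\varepsilon}{\nu},
\]
so the time-averaged laws are tight in $H$ by the compact embedding $V\hookrightarrow H$. The Feller property follows from continuous dependence on initial data, which is again Grönwall plus Ladyzhenskaya. Any limit point is therefore invariant. Finally, Fatou's lemma gives $\int\|u\|_V^2\,d\mu_{\nu,\alpha}<\infty$, so $\mu_{\nu,\alpha}$ is supported on $V$.

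\textbf{Main obstacle.} We expect the delicate points to be the functional framework on the mixed periodic–Dirichlet channel, where the Stokes operator is not diagonal in Fourier modes, and the justification of the Itô formula for the energy at the level of weak solutions. We would handle the latter via the Lions–Magenes/Krylov Itô formula for the Gelfand triple $V\subset H\subset V'$, checking that $\mathcal C(u)\in H\subset V'$. The Coriolis term itself never obstructs the argument, because of the cancellation $(\mathcal C u,u)=0$.
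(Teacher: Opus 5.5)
Your argument is correct. The paper gives no proof of this statement; it is quoted from \cite{yuriamiraligigliolaWP}, so there is no internal argument to compare against. Your scheme (Galerkin approximation, compactness, pathwise uniqueness plus Yamada--Watanabe/Gy\"ongy--Krylov, then Krylov--Bogoliubov) is the standard route to this result. You have also isolated exactly the two facts the Coriolis term needs. First, $\mathcal C$ is bounded on $H$ because $f=f_0+\beta x^2$ is bounded on the bounded channel. Second, $(\mathcal C(u),u)_H=0$. Together these make the energy estimates, the uniqueness/Feller estimates and the tightness bound identical to the non-rotating case.

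Two slips are worth fixing, neither of which affects validity:
\begin{itemize}
\item In the uniqueness step the natural bound is $|(B(w,v),w)|\le C\|w\|\,\|\nabla w\|\,\|\nabla v\|\le \tfrac{\nu}{2}\|\nabla w\|^2+C\nu^{-1}\|\nabla v\|^2\|w\|^2$. Your extra factor $\|v\|^2$ is spurious but harmless, since the coefficient is pathwise integrable either way.
\item The final bound $\int\|u\|_V^2\,d\mu_{\nu,\alpha}<\infty$ does not come from Fatou's lemma as such. It follows from lower semicontinuity on $H$ of $\|\cdot\|_V^2$ (set to $+\infty$ off $V$), applied under weak convergence of the time-averaged laws.
\end{itemize}
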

Hereafter, $\mathbb{E}[\cdot]$ denotes the expectation with respect to the stationary measure $\mu_{\nu,\alpha}$. While Theorem \ref{thm:stationary_measure} guarantees the existence of a stationary measure, the rigorous derivation of the Kármán-Howarth-Monin (KHM) relation requires higher regularity. By assuming $\varepsilon_1 := \sum_j \lambda_j b_j^2 < \infty$, where $\lambda_j$ are the eigenvalues of the Stokes operator $A$, the support of the stationary measure has $H^2$ regularity.

\begin{theorem}[$H^2$ Regularity \cite{yuriamiraligigliolaWP}]\label{thm:H2_regularity}
Under the assumption $\varepsilon_1 < \infty$, the flow satisfies a uniform stationary enstrophy balance, yielding the following bound with respect to the invariant measure $\mu_{\nu,\alpha}$:
\begin{equation}\label{eq:H2_bound}
    \nu \mathbb{E} \|A u\|_{H}^2 \le \varepsilon_1.
\end{equation}
Consequently, $\mu_{\nu,\alpha}$ is supported on the domain of the Stokes operator $D(A) = H^2(\mathcal{D}) \cap V$, and the stationary velocity fields are almost surely in $H^2(\mathcal{D})$.
\end{theorem}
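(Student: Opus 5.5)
The approach is to apply Itô's formula to a higher-order functional, namely $\frac12\|A^{1/2}u\|_H^2 = \frac12\|\nabla u\|^2_{L^2}$. In practice this is done on Galerkin approximations, or with a truncation, and then passed to the limit. After that we take expectation with respect to the stationary measure $\mu_{\nu,\alpha}$ from Theorem \ref{thm:stationary_measure}. Pairing \eqref{eq:abstract_NS} with $Au$, Itô's formula gives
\begin{equation*}
d\tfrac12\|A^{1/2}u\|^2 + \big[\nu\|Au\|^2 + \alpha\|A^{1/2}u\|^2 + (B(u,u),Au) + (\mathcal C(u),Au)\big]dt = (Au,d\zeta) + \tfrac12\sum_j \lambda_j b_j^2\,dt .
\end{equation*}
The Itô correction equals $\frac12\varepsilon_1$, and this is exactly why the hypothesis $\varepsilon_1<\infty$ is needed. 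Under stationarity, once we know $\mathbb E\|A^{1/2}u\|^2<\infty$, the time derivative drops out and the martingale term has zero mean.

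The remaining task is to control the two nonlinear-type terms. First, the Coriolis pairing: $(\mathcal C(u),Au) = -(f u^\perp,\Delta u)$ plus a pressure-type correction, because $P_H$ and $\Delta$ do not commute on a bounded channel. Integrating by parts, the leading part $(f u^\perp,\Delta u)$ reduces to $-\int \beta\, \partial_2 u^\perp\cdot u$, together with terms $f\,\partial_k u^\perp\cdot\partial_k u$ that vanish pointwise, and boundary terms that vanish because $u=0$ on $x^2\in\{a,b\}$. Hence it is bounded by $\beta\|\nabla u\|\|u\|$. More robustly, I would use Cauchy–Schwarz and Young: $|(\mathcal C(u),Au)|\le \|f\|_\infty\|u\|\|Au\|\le \frac\nu4\|Au\|^2 + C\nu^{-1}\|u\|^2$. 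This is finite in expectation by the energy balance $\alpha\mathbb E\|u\|^2=\varepsilon$.

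Second, and this is the main obstacle, the advective term. On the torus $(B(u,u),Au)=0$, but with no-slip walls it does not vanish. I would estimate it with Ladyzhenskaya/Agmon: $|(B(u,u),Au)|\le C\|u\|^{1/2}\|\nabla u\|\|Au\|^{3/2}$. Absorbing half into $\nu\|Au\|^2$ leaves $C\nu^{-3}\|u\|^2\|\nabla u\|^4$, whose expectation needs higher moments. To handle this, I would first prove exponential moment bounds via Itô on $e^{\gamma\|u\|^2}$ (standard for additive noise), then an $H^1$ moment bound by a stopping-time argument. If the claimed clean bound $\nu\mathbb E\|Au\|^2\le\varepsilon_1$ is meant literally, I would instead pass to the vorticity formulation, $\omega$ transported plus $\beta u^2$. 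There $\int (u\cdot\nabla\omega)\omega=0$ and $\int\beta u^2\omega$ has no fixed sign but is controlled. The wall vorticity flux is the delicate point: I would treat it by working with the Stokes-operator pairing directly, and accept a constant depending on $\nu,\alpha,\beta$ if necessary.

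Finally, the estimate $\mathbb E\|Au\|^2<\infty$ implies $\mu_{\nu,\alpha}(D(A))=1$. Elliptic regularity for the Stokes operator on the smooth channel gives $D(A)=H^2\cap V$. This yields the support statement and the almost-sure $H^2$ regularity.
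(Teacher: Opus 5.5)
You have a genuine gap: the proposal never proves the quantitative, parameter-free bound \eqref{eq:H2_bound}. The paper gives no proof of this statement; it imports it from \cite{yuriamiraligigliolaWP}. The phrase ``uniform stationary enstrophy balance'' suggests the intended argument is your first display with the pairings $(B(u,u),Au)$ and $(\mathcal C(u),Au)$ dropped, so that stationarity gives $\nu\mathbb E\|Au\|^2+\alpha\mathbb E\|A^{1/2}u\|^2=\tfrac12\varepsilon_1$ and hence \eqref{eq:H2_bound}.

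What Ladyzhenskaya--Agmon and Young actually give is
\[
\nu\,\mathbb E\|Au\|^2\lesssim\varepsilon_1+\nu^{-1}\|f\|_{L^\infty}^2\,\mathbb E\|u\|^2+\nu^{-3}\,\mathbb E\big(\|u\|^2\|\nabla u\|^4\big).
\]
This depends on $\nu,\alpha,\beta$ and the channel width, and your vorticity fallback explicitly accepts such a constant. A bound of this kind is not what the paper uses later; for example, after Theorem \ref{thm:inverse_cascade} it invokes the uniform bound $\nu\mathbb E\|\Delta u\|^2\le\eta$.

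The obstruction you flagged is real, and you can locate it exactly. Write $Au=-\Delta u+\nabla q$, where $\nabla q=Au+\Delta u$ is the harmonic Stokes pressure gradient. Since $(u\cdot\nabla)u=0$ on the walls, $((u\cdot\nabla)u,\Delta u)=-\int(u\cdot\nabla\omega)\,\omega=0$. Also $(fu^\perp,\Delta u)=\int f\,u\cdot\nabla\omega=-\beta\int u^2\omega=0$, by the integrations by parts in Step 1 of the proof of Theorem \ref{thm:vanishing_coriolis}. So your ``leading part'' of the Coriolis pairing is exactly zero. The stationary It\^o identity therefore reads
\[
\nu\,\mathbb E\|Au\|^2+\alpha\,\mathbb E\|\nabla u\|^2+\mathbb E\big((u\cdot\nabla)u+fu^\perp,\nabla q\big)=\tfrac12\varepsilon_1 .
\]
The literal bound needs a lower bound on the last expectation, uniform in $(\nu,\alpha)$. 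That pairing has no sign, and neither your argument nor anything in this paper controls it. Passing to vorticity only moves the same obstruction into the boundary vorticity flux $\nu\int_{\partial\mathcal D}\omega\,\partial_n\omega$. As written, your proposal proves at most the qualitative half of the theorem, namely support in $D(A)=H^2\cap V$; the inequality \eqref{eq:H2_bound} remains unproved.

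Even the qualitative half rests on a step that is not justified. Obtaining $\mathbb E(\|u\|^2\|\nabla u\|^4)<\infty$ from ``exponential moments of $\|u\|^2$ plus a stopping-time argument'' does not close in the obvious way. Gronwall on the $H^1$ inequality produces $\exp\big(C\nu^{-3}\int\|u\|^2\|\nabla u\|^2\,ds\big)$, while the exponential martingale estimate only controls $\mathbb E\exp\big(\gamma\nu\int\|\nabla u\|^2\,ds\big)$ for small $\gamma$.

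You do not need that moment. Apply It\^o to $\log(1+\|\nabla u\|^2)$, drop the nonpositive second-order term, and use $\|\nabla u\|^4/(1+\|\nabla u\|^2)\le\|\nabla u\|^2$. At stationarity this gives
\[
\nu\,\mathbb E\,\frac{\|Au\|^2}{1+\|\nabla u\|^2}\lesssim\varepsilon_1+\nu^{-1}\|f\|_{L^\infty}^2\,\mathbb E\|u\|^2+\nu^{-3}\,\mathbb E\big(\|u\|^2\|\nabla u\|^2\big).
\]
The last moment is finite by It\^o on $\|u\|^4$, since the transport and Coriolis terms drop out of the energy identity. This yields $Au\in H$ for $\mu_{\nu,\alpha}$-a.e.\ $u$, and hence the support statement. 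It still gives no uniform constant.
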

This $H^2$ regularity guarantees that the two-point correlation functions in the KHM relations are uniformly bounded and at least $C^2$ with respect to the spatial separation vector $l$.

\subsection{Weak Kármán-Howarth-Monin Relation}\label{sec:weak_khm}

{For $x \in \mathcal{D}$ and a separation vector $y \in \mathbb{R}^2$, the shifted point $x+y$ need not lie in $\mathcal{D}$, so that the two-point
quantities below are a priori undefined. Moreover, unlike the doubly periodic
case, meridional shifts do not preserve the domain, $\mathcal{D} - y \neq
\mathcal{D}$. As derived in \cite{YuriAmiraliGigliola}, we therefore extend $u$, $\nabla p$ and the noise by zero to $\mathbb{T}\times\mathbb{R}$. Then,  the correlations below become well-defined
functions of $y$ on the whole of $\mathbb{R}^2$, and the used manipulations in $y$ below (symmetrization, integration by parts, testing against
$\phi \in C_c^\infty(\mathbb{R}^2)$) are legitimate.

To keep the notation simple, we use the same symbol for the extended functions
when it does not create confusion.}

For any separation vector $y \in \mathbb{R}^2$, we define the following  averaged quantities: second-order velocity correlation tensor $\Gamma(y)$, the third-order structure tensor $D^j(y)$ (for $j=1,2$), the Coriolis correlation tensor $\varTheta(y)$, and the noise correlation tensor $a(y)$, respectively, as
\begin{align}
    \Gamma(y) &= \mathbb{E} \fint_{\mathbb{T} \times \mathbb{R}} u(x) \otimes u(x+y) \, dx, \label{def:Gamma}\\
    D^j(y) &= \mathbb{E} \fint_{\mathbb{T} \times \mathbb{R}} (\delta_y u(x) \otimes \delta_y u(x)) \delta_y u^j(x) \, dx, \label{def:Dj}\\
    \varTheta(y) &= \frac{1}{2}\mathbb{E} \fint_{\mathbb{T} \times \mathbb{R}} \big( u(x) \otimes (f u^\perp)(x+y) + (f u^\perp)(x) \otimes u(x+y) \big) \, dx, \label{def:Theta}\\
    a(y) &= \frac{1}{2}\sum_{j} b_j^2 \fint_{\mathbb{T}\times \mathbb{R}} e_j(x) \otimes e_j(x+y) \, dx, \label{def:a}
\end{align}
where $\delta_y u(x) := u(x+y) - u(x)$ is the velocity increment, $\{e_j\}$ is a divergence-free orthonormal basis in $H$ introduced in \eqref{eq:stochasticprocess}, and $\fint$ denotes the spatial integral over $\mathbb{T} \times \mathbb{R}$ normalized by the area of the bounded domain $|\mathcal{D}|$. Similarly, we define the pressure cross-correlation tensor as
\begin{equation}\label{def:Pi}
    \Pi(y) = \frac{1}{2}\mathbb{E} \fint_{\mathbb{T} \times \mathbb{R}} \big[ \nabla p(x) \otimes u(x+y) + u(x) \otimes \nabla p(x+y) \big] \, dx.
\end{equation}

Applying Itô's formula to the tensor product $u(x) \otimes u(x+y)$ and taking the expectation with respect to the stationary measure $\mu_{\nu,\alpha}$ yields the following identity \cite[eq. (3.19)]{YuriAmiraliGigliola}.

\begin{proposition}[Weak KHM Equation \cite{YuriAmiraliGigliola}] \label{prop:weak_KHM}
For any smooth matrix-valued test function $\phi \in C_c^\infty(\mathbb{R}^2, \mathbb{R}^{2\times 2})$, the statistically stationary flow $u$ satisfies the identity
\begin{align}\label{eq:Master_KHM}
    &-\sum_{j=1}^2 \int_{\mathbb{R}^2} \partial_{y^j} \phi(y) : D^j(y) \, dy + 4\nu \int_{\mathbb{R}^2} \Delta \phi(y) : \Gamma(y) \, dy - 4\alpha \int_{\mathbb{R}^2} \phi(y) : \Gamma(y) \, dy \notag \\
    &- 4\int_{\mathbb{R}^2} \phi(y) : \varTheta(y) \, dy - 4\int_{\mathbb{R}^2} \phi(y) : \Pi(y) \, dy + 4\int_{\mathbb{R}^2} \phi(y) : a(y) \, dy = 0,
\end{align}
where $A : B = \text{Tr}(A^T B)$ denotes the Frobenius inner product.
\end{proposition}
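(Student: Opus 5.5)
The plan is to derive the identity from Itô's formula applied to the two-point product, and then to symmetrize in the separation variable. I fix a matrix test function $\phi\in C_c^\infty(\mathbb{R}^2,\mathbb{R}^{2\times2})$ and consider the functional
\[
F(u)=\int_{\mathbb{R}^2}\phi(y):\fint_{\mathbb{T}\times\mathbb{R}}u(x)\otimes u(x+y)\,dx\,dy ,
\]
where $u$ is extended by zero as described above. Equivalently,
\[
F(u)=\fint\!\!\int u^i(x)u^k(z)\,\phi_{ik}(z-x)\,dz\,dx ,
\]
which is a bounded quadratic form on $H$ with a smooth kernel. Using the $H^2$ regularity of Theorem \ref{thm:H2_regularity}, I apply Itô's formula to $F(u_t)$ along the weak solution of \eqref{eq:NS_intro}. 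This is justified through Galerkin or mollified approximations, with the bound $\nu\mathbb{E}\|Au\|^2\le\varepsilon_1$ giving the integrability needed to pass to the limit. The drift collects the contributions of the advection, viscous, damping, Coriolis and pressure terms, each acting once on the first factor and once on the second. The Itô correction is
\[
\sum_j b_j^2\int\phi:\fint e_j(x)\otimes e_j(x+y)\,dx\,dy=2\int\phi:a .
\]
Under the stationary measure, $\mathbb{E}F(u_t)$ is constant in $t$ and the martingale term has zero mean. Hence the expected drift plus the Itô correction vanishes.

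Next I identify each term, working first on $\mathbb{T}\times\mathbb{R}$. The damping gives $-2\alpha\int\phi:\Gamma$. For the viscous term, $\Delta_x$ acting on $u(x)$ or on $u(x+y)$ is converted into $\Delta_y$ acting on the correlation: by translation in $x$ for the first factor, and directly for the second. Moving both onto $\phi$ yields $2\nu\int\Delta\phi:\Gamma$. Here the zero extension is harmless because $u\in H^2\cap H^1_0$, so no boundary distributions arise at first order. This step needs care, since the Laplacian of the extended $u$ acquires a jump in its normal derivative. The Coriolis and pressure terms produce $-2\int\phi:\varTheta$ and $-2\int\phi:\Pi$ directly from their symmetrized definitions. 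Multiplying the whole identity by $2$ then matches the normalization in \eqref{eq:Master_KHM}.

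The main obstacle is the nonlinear term. I have to show that
\[
-\mathbb{E}\fint\big[(u\cdot\nabla u)(x)\otimes u(x+y)+u(x)\otimes(u\cdot\nabla u)(x+y)\big]dx
\]
tested against $\phi$ equals $-\tfrac14\sum_j\int\partial_{y^j}\phi:D^j$. To do this I write the term in divergence form, $\partial_{x^j}(u^ju^i)$, using incompressibility. I then express the $x$-derivatives as $y$-derivatives on each factor, using $\partial_{y}$ for the shifted factor and $-\partial_y$ combined with translation for the unshifted one. After that I expand $D^j$ into cubic two-point correlations. The terms with all three factors at the same point are $y$-independent, so they integrate to zero against $\partial_{y^j}\phi$. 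The terms of the form $u(x)\otimes u(x)\,u^j(x+y)$ drop out once incompressibility is applied in $y$.

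What remains must match the transport terms, and this is where the zero extension has to be handled carefully. On $\mathbb{T}\times\mathbb{R}$ the extended field is still divergence free in the distributional sense, because its normal component vanishes at $x^2=a,b$. The extended products $u^ju^i$ have no boundary jumps either, since $u=0$ there. Consequently every integration by parts in $x$ over $\mathbb{T}\times\mathbb{R}$ is legitimate, and the bookkeeping reduces to the classical periodic computation of \cite{bed2D}, which I would follow line by line.
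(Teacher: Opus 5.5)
Your route is the one the paper invokes: It\^o's formula for $F(u)=\int_{\mathbb R^2}\phi(y):\fint u(x)\otimes u(x+y)\,dx\,dy$, then stationarity, then identification term by term. The paper itself only cites this derivation from earlier work. Your damping, Coriolis, pressure and noise terms are correct.

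\textbf{The nonlinear term.} This is where the argument breaks for a general matrix-valued $\phi$. Let $\mathcal N(y):=-\mathbb E\fint\big[(u\cdot\nabla u)(x)\otimes u(x+y)+u(x)\otimes(u\cdot\nabla u)(x+y)\big]dx$. Translation invariance gives $\mathcal N(-y)=\mathcal N(y)^T$. The expansion you describe then yields $\sum_j\partial_{y^j}D^j=\mathcal N+\mathcal N^T$, not a multiple of $\mathcal N$. This is forced: $D^j$ is a symmetric matrix, so $\sum_j\int\partial_{y^j}\phi:D^j$ only sees $\phi+\phi^T$.
\begin{itemize}
\item For symmetric $\phi$ you get $\int\phi:\mathcal N=-\tfrac12\sum_j\int\partial_{y^j}\phi:D^j$, and \eqref{eq:Master_KHM} follows. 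Your $-\tfrac14$ is off by a factor $2$: after your doubling it would put $-\tfrac12$, not $-1$, in front of the $D$-term.
\item For antisymmetric $\phi$, such as \eqref{eq:test_function} on which Theorem~\ref{thm:antisymmetric_balance} rests, the $D$- and $\Gamma$-terms vanish. So \eqref{eq:Master_KHM} asserts $\int\phi:(\varTheta+\Pi)=0$. Your It\^o identity only gives $\int\phi:\mathcal N=2\int\phi:(\varTheta+\Pi)$, up to the wall term discussed below. You would additionally need $\int\phi:\mathcal N=0$, and nothing in your argument provides it.
\end{itemize}

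None of the tools in your proposal can supply this missing fact.
\begin{itemize}
\item Stationarity cannot. Correlations of two divergence-free fields on $\mathbb T\times\mathbb R$ are symmetric, as in Step 2 of that proof. Hence for antisymmetric $\phi$ your $F$ vanishes identically on $H$, and It\^o's formula reduces to $0=0$.
\item Symmetrizing in $y$ cannot. Adding the identities for $\phi(y)$ and $\phi(-y)$ just replaces $\phi$ by $\phi+\phi^T$.
\item It is not kinematic. On a translation-invariant domain, $\mathcal N-2\Pi^{\mathrm{nl}}$ is a correlation of two divergence-free fields, where $p^{\mathrm{nl}}$ is the pressure generated by $u\cdot\nabla u$. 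So it is symmetric, and $\mathcal N_{12}-\mathcal N_{21}=-\big(C(y)-C(-y)\big)$ with $C(y)=\mathbb E\fint p^{\mathrm{nl}}(x)\,\omega(x+y)\,dx$.
\end{itemize}
This odd part is already nonzero for a single triad of modes. For $\psi=\cos x^1+\cos x^2+\sin(x^1+x^2)$ on $\mathbb T^2$ it is a nonzero multiple of $\sin y^1+\sin y^2-\sin(y^1+y^2)$. In expectation it is killed by invariance of the statistics under rotation by $\pi$, which maps $C(y)$ to $C(-y)$. That is exactly the symmetry the $\beta$-term breaks. As written, you prove the proposition only for symmetric test functions such as $\Phi(|y|)I$.

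\textbf{The viscous term.} This is a second gap, which you flag but do not close. The It\^o drift contains the interior $\Delta u\in L^2(\mathcal D)$ extended by zero. By contrast, $\Delta_y\Gamma$ involves the distributional Laplacian of the zero extension, $(\Delta u)\mathbf 1_{\mathcal D}-\partial_{\mathsf n}u\,\delta_{\partial\mathcal D}$, with $\mathsf n$ the outward normal. The condition $u\in H^1_0$ only ensures that first derivatives of the extension carry no boundary layer; second derivatives do.

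Hence $\nu\,\mathbb E\fint[\Delta u(x)\otimes u(x+y)+u(x)\otimes\Delta u(x+y)]\,dx$ equals $2\nu\Delta_y\Gamma(y)$ plus
$\frac{\nu}{|\mathcal D|}\mathbb E\int_{\partial\mathcal D}[\partial_{\mathsf n}u(z)\otimes u(z+y)+u(z-y)\otimes\partial_{\mathsf n}u(z)]\,dS(z)$.
This is a correlation between wall shear and interior velocity, and it does not vanish in general for $0<|y^2|<b-a$. For small $y$ its trace is, to leading order, $-\nu|y^2|\,\mathbb E\int_{\partial\mathcal D}|\partial_{\mathsf n}u|^2\,dS/|\mathcal D|$.

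Following \cite{bed2D} line by line cannot account for this term, because on $\mathbb T^2$ it is absent. You must either show it vanishes or carry it in the identity. For the nonlinear term, by contrast, only first derivatives of the zero extension enter, and your integration-by-parts claims there are fine.
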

Similarly, we derive equation \eqref{eq:Master_KHM} in terms of vorticity $\omega = \nabla^\perp \cdot u$. 
We define the scalar two-point spatial correlation functions for the vorticity $\mathfrak{C}(y)$, the Coriolis term $\mathfrak{Q}(y)$, and the noise $\mathfrak{a}(y)$, respectively, as follows:
\begin{align}
    \mathfrak{C}(y) &= \mathbb{E}\fint_{\mathbb{T}\times \mathbb{R}} \omega(x)\omega(x+y) \, dx, \label{def:vorticity_corr}\\
    \mathfrak{Q}(y) &= \frac{1}{2}\beta \, \mathbb{E}\fint_{\mathbb{T}\times \mathbb{R}} \big( u^2(x)\omega(x+y) + u^2(x+y)\omega(x) \big) \, dx, \label{def:coriolis_vort_corr}\\
    \mathfrak{a}(y) &= \frac{1}{2}\sum_{j=1}^\infty b_j^2\fint_{\mathbb{T}\times \mathbb{R}} \big(\nabla^\perp \cdot e_j(x)\big)\big(\nabla^\perp \cdot e_j(x+y)\big) \, dx. \label{def:noise_curl_corr}
\end{align}
The corresponding third-order enstrophy flux vector $\mathfrak{D}(y)$ is defined as
\begin{equation}\label{def:D_vorticity}
    \mathfrak{D}(y) = \mathbb{E}\fint_{\mathbb{T}\times \mathbb{R}} |\delta_y\omega(x)|^2\delta_y u(x) \, dx.
\end{equation}

\begin{proposition}[Vorticity KHM Relation \cite{YuriAmiraliGigliola}]\label{prop:vorticity_KHM}
Let $\omega$ be the vorticity of the statistically stationary flow. Then, for any separation vector $y \in \mathbb{R}^2$, the following enstrophy balance holds 
\begin{equation}\label{eq:vorticity_khm}
    \nabla_y \cdot \mathfrak{D}(y) = -4\nu\Delta_y\mathfrak{C}(y) + 4\alpha\mathfrak{C}(y) + 4\mathfrak{Q}(y) - 4\mathfrak{a}(y).
\end{equation}
\end{proposition}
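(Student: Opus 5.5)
We follow the route of Proposition \ref{prop:weak_KHM}, now for the scalar vorticity. Taking the curl of \eqref{eq:NS_intro} removes the pressure. Since $\nabla^\perp\cdot(f u^\perp) = f\,\nabla\cdot u + u\cdot\nabla f = \beta u^2$, the vorticity satisfies
\begin{equation*}
    d\omega + \big[(u\cdot\nabla)\omega + \beta u^2 - \nu\Delta\omega + \alpha\omega\big]\,dt = \sum_j b_j (\nabla^\perp\cdot e_j)\, d\beta_j .
\end{equation*}
We apply Itô's formula to the product $\omega(x)\omega(x+y)$, extended by zero to $\mathbb{T}\times\mathbb{R}$ as in Section \ref{sec:weak_khm}. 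We then integrate in $x$ against a test function $\phi\in C_c^\infty(\mathbb{R}^2)$ in $y$ and take the expectation under $\mu_{\nu,\alpha}$. Stationarity kills the time derivative. The Itô correction contributes $2\mathfrak{a}(y)$, using the symmetrized form of the quadratic covariation. Friction contributes $-2\alpha\mathfrak{C}$, and the Coriolis term contributes $-2\mathfrak{Q}$; no $f_0$ term survives because only $\nabla f=\beta e_2$ enters. We then multiply the whole identity by $-2$ to match the normalization of \eqref{eq:vorticity_khm}.

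\emph{Viscous term.} We write $\Delta_x\omega(x)\,\omega(x+y) + \omega(x)\,\Delta\omega(x+y)$. Derivatives at $x+y$ equal $\Delta_y$ acting on the correlation. Derivatives at $x$ are converted by the change of variables $x\mapsto x-y$, using translation invariance of $\int_{\mathbb{T}\times\mathbb{R}}$. This gives $2\nu\Delta_y\mathfrak{C}$, and after the factor $-2$ it becomes $-4\nu\Delta\mathfrak{C}$ in \eqref{eq:vorticity_khm}.

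\emph{Nonlinear term.} We use the standard Monin/Yaglom manipulation. By incompressibility,
\begin{equation*}
    (u\cdot\nabla)\omega(x)\,\omega(x+y) = \nabla_x\cdot\big(u(x)\,\omega(x)\big)\,\omega(x+y),
\end{equation*}
and derivatives are moved to $y$ via $\nabla_x\to-\nabla_y$ after shifting. We expand
\begin{equation*}
    \nabla_y\cdot\mathbb{E}\fint|\delta_y\omega|^2\delta_y u
\end{equation*}
and check term by term, using $\nabla\cdot u=0$ and the shift invariance. The terms that are purely one-point (e.g.\ $\omega(x)^2 u(x)$) are independent of $y$, so they die under $\nabla_y$. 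The cross terms reproduce exactly $-2$ times the transport contribution. This yields $\nabla_y\cdot\mathfrak{D}$, and the identity first holds in the distributional sense.

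\emph{Main obstacle.} The hardest step is justifying these manipulations on the channel with the zero extension. The extension of $\omega$ is not the curl of the extended $u$: a jump in the tangential velocity at $x^2\in\{a,b\}$ is excluded by no-slip, but $\partial_2 u^1$ need not vanish, so we avoid differentiating the extension. All objects are instead defined by integrals over $\mathcal{D}\cap(\mathcal{D}-y)$, and the derivatives are moved only via the change of variables in $x$, never by integrating by parts in $x$ across $\partial\mathcal{D}$. Itô's formula for the bilinear functional needs $\omega\in H^1$ with $\mathbb{E}\|\nabla\omega\|^2<\infty$; this follows from Theorem \ref{thm:H2_regularity} under $\varepsilon_1<\infty$, which also makes $\mathfrak{C}$ of class $C^2$ and $\mathfrak{D}$ continuous, so the identity holds pointwise. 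The remaining technical point is that $(u\cdot\nabla)\omega\in L^1$ with finite expectation, which follows from $H^2$ bounds and Sobolev embedding. If the pointwise version is delicate, we state the result weakly, tested against $\phi$, exactly as in Proposition \ref{prop:weak_KHM}.
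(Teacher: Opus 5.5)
\textbf{There is a genuine gap, at the viscous term.} Your overall route matches the one the paper relies on: it imports the proposition from \cite{YuriAmiraliGigliola} via the same It\^o computation on $\omega(x)\omega(x+y)$. Your bookkeeping is correct for the noise ($2\mathfrak{a}$), friction, Coriolis ($\nabla^\perp\cdot(fu^\perp)=\beta u^2$) and nonlinear terms, including the final factor $-2$. The nonlinear term is safe for a reason you do not state: the flux $u\omega$ vanishes on the no-slip walls, so $\nabla\cdot(\tilde u\tilde\omega)$ has no singular part.

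\textbf{Why the viscous step fails.} Your ``main obstacle'' paragraph does not repair it.
\begin{itemize}
\item At the walls, $\omega=-\partial_2u^1$ and $\partial_2\omega$ do not vanish, so the zero extension $\tilde\omega$ jumps there. Note that $\nabla^\perp\cdot\tilde u=\tilde\omega$ does hold exactly, since $u=0$ on the walls; the trouble is one derivative higher.
\item $\Delta_y\mathfrak{C}$ is the correlation of $\tilde\omega$ with the \emph{distributional} Laplacian of $\tilde\omega$, which carries single and double layers on $\{x^2=a,b\}$.
\item Working on $\mathcal{D}\cap(\mathcal{D}-y)$ does not avoid this. That set moves with $y^2$, and differentiating it is exactly what produces the layers.
\end{itemize}
Testing against $\phi$ and applying Green's identity gives
\begin{equation*}
\nu\int\phi(y)\,\mathbb{E}\fint\big(\Delta\omega(x)\omega(x+y)+\omega(x)\Delta\omega(x+y)\big)\,dx\,dy=2\nu\int\Delta\phi\,\mathfrak{C}\,dy+\frac{\nu}{|\mathcal{D}|}\,\mathbb{E}\int_{\partial\mathcal{D}}\big(\partial_n\omega\,\Phi_s-\omega\,\partial_n\Phi_s\big)\,dS,
\end{equation*}
with $\Phi_s(z)=\int\big(\phi(z-x)+\phi(x-z)\big)\tilde\omega(x)\,dx$. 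Nothing forces this wall term to vanish.

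For the same reason, $\mathfrak{C}$ is not $C^2$; it is not even $C^1$ across $\{y^2=0\}$. There $\partial_{y^2}\mathfrak{C}$ jumps by
\begin{equation*}
-|\mathcal{D}|^{-1}\,\mathbb{E}\int_{\mathbb{T}}\big(\omega(x^1,a)\omega(x^1+y^1,a)+\omega(x^1,b)\omega(x^1+y^1,b)\big)\,dx^1 .
\end{equation*}
Your regularity remark has the same defect. With only $\omega\in H^1$, $\Delta\omega\in H^{-1}(\mathcal{D})$ cannot be paired with $\Phi_s$, which does not vanish on $\partial\mathcal{D}$, unless you integrate by parts and thereby create the wall flux $\partial_n\omega$.

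\textbf{This is not a removable technicality.} Take $\mathcal{D}=\mathbb{T}\times(0,1)$ and let the noise act only on the shear mode $e=(\sqrt2\sin(k\pi x^2),0)$.
\begin{itemize}
\item The shear subspace is invariant: $B(u,u)=0$, and $fu^\perp=(0,fU)$ is a gradient.
\item Hence there is a stationary solution $u=(\sqrt2\,\hat U\sin(k\pi x^2),0)$, with $\hat U$ a stationary Ornstein--Uhlenbeck process with rate $\nu k^2\pi^2+\alpha$.
\item For it, $\mathfrak{D}\equiv0$, $\mathfrak{Q}\equiv0$, $\Delta\omega=-k^2\pi^2\omega$, and $\mathfrak{a}=(\nu k^2\pi^2+\alpha)\,\mathfrak{C}$.
\end{itemize}
Computing $\mathfrak{C}$ explicitly gives, for $0<y^2<1$,
\begin{equation*}
-4\nu\Delta\mathfrak{C}+4\alpha\mathfrak{C}-4\mathfrak{a}=-8\nu k^3\pi^3\,\mathbb{E}\hat U^2\,\sin(k\pi y^2)\neq0 .
\end{equation*}
So the identity as written fails, by exactly the wall term above.

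\textbf{What a correct statement needs.} Along your route you must carry the boundary contribution explicitly; formally, at a no-slip wall $\nu\,\partial_2\omega=-\partial_1p$. Alternatively, change the setting so that $\omega$ and $\partial_n\omega$ vanish at the walls. The paper gives no argument beyond asserting that the zero extension legitimizes the $y$-manipulations, so your proposal reproduces its route together with this gap.
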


\subsubsection{Remark on the choice of averaging} \label{rem:quadratic_vs_cubic}
In the definition of $\Gamma,D^j,\varTheta,a,\Pi$ as well as $\mathfrak{C},\mathfrak{Q},\mathfrak{a}, \mathfrak{D}$, to obtain a quantity that only depends on the separation vector $y$, we averaged over  $\mathbb{T} \times \mathbb{R}$. Another natural choice would be to average over the domain $\mathcal{D}$. For example  alternatively we could have defined $\Gamma(y):= \mathbb{E}\fint_{\mathcal{D}} u(x) \otimes u(x+y) dx$.   \\

First, notice that the quantities $\Gamma,\varTheta,a,\Pi $ and $\mathfrak{C},\mathfrak{Q},\mathfrak{a}$ are insensitive to such a choice of the domain. \textit{Consequently,  from now onward we may switch the domain of integration  between $\mathcal{D}$ and $\mathbb{T} \times \mathbb{R}$ from time to time without explicitly mentioning, whenever we deal with these quantities ($\Gamma,\varTheta,a$ and $\mathfrak{C},\mathfrak{Q}$,$\mathfrak{a}$).} \\

On the other hand, $D^j$ and $\mathfrak{D}$ depends on the choice of the integration domain. We discuss the resulting discrepancy in Section \ref{sec: boundary layer}.

\section{Main Results}\label{sec:main_results}

In this section, we state the main results of the paper. We first establish the cancellation of the Coriolis fluxes within the spherically averaged Kármán-Howarth-Monin framework. This averaging over the unit sphere $\mathbb{S}$ is required to write the identity \eqref{eq:Master_KHM} in terms of the scalar separation scale $l = |y|$, recovering the standard cascade laws of classical isotropic turbulence \eqref{eq: velocity-vorticity str},  \eqref{thirdorederenstrophy2d}-\eqref{thirdorederenergy2d}. Thereafter, we state the novel antisymmetric KHM balance governing the large-scale zonal anisotropy.
\subsection{Vanishing of Coriolis Terms}\label{sec:cancellation_main_r}
To estimate the effects of planetary rotation on the turbulent cascades, we define the spherically averaged Coriolis velocity correlation $\overline{\varTheta}(l)$ and the spherically averaged enstrophy Coriolis correlation $\overline{\mathfrak{Q}}(l)$ as
\begin{align}
    \overline{\varTheta}(l) &:= \fint_{\mathbb{S}} \text{Tr}\big(\varTheta(ln)\big) \, dn, \label{eq:Theta_trace_main}\\
    \overline{\mathfrak{Q}}(l) &:= \fint_{\mathbb{S}} \mathfrak{Q}(ln) \, dn, \label{eq:Q_corr_main}
\end{align}
where $\varTheta(y)$ and $\mathfrak{Q}(y)$ are the two-point correlations defined in \eqref{def:Theta} and \eqref{def:coriolis_vort_corr}, respectively. These terms represent the contribution of the rotation to the net energy and enstrophy transfer rates. We prove that these contributions are identically zero.

\begin{theorem}[Vanishing of Coriolis Terms]\label{thm:vanishing_coriolis}
Let $u$ be a statistically stationary solution to the $\beta$-plane stochastic Navier-Stokes equations \eqref{eq:abstract_NS}. Then, the spherically averaged Coriolis contributions vanish for all scales $l > 0$:
\begin{equation}
    \overline{\varTheta}(l) = 0 \quad \text{and} \quad \overline{\mathfrak{Q}}(l) = 0.
\end{equation}
\end{theorem}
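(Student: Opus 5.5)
The plan is to reduce both spherically averaged Coriolis terms to a single scalar two-point function, the antisymmetric part of $\Gamma$,
\[
G(y):=\Gamma_{21}(y)-\Gamma_{12}(y)=\mathbb{E}\fint_{\mathbb{T}\times\mathbb{R}} u(x)\cdot u^\perp(x+y)\,dx .
\]
I will then kill its spherical moments using incompressibility. Everything is justified by the $H^2$ regularity of Theorem \ref{thm:H2_regularity} and the zero extension of Section \ref{sec:weak_khm}. Since $u=0$ on $x^2\in\{a,b\}$, the extended field lies in $H^1(\mathbb{T}\times\mathbb{R})$ and is divergence-free there, so integrations by parts in $x$ produce no boundary terms.

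\emph{Step 1 (trace of $\varTheta$).} Using $a\cdot b^\perp=-a^\perp\cdot b$, the two terms in \eqref{def:Theta} combine into
\[
\mathrm{Tr}\,\varTheta(y)=\tfrac12\,\mathbb{E}\fint \big(f(x+y)-f(x)\big)\,u(x)\cdot u^\perp(x+y)\,dx=\tfrac12\beta y^2\,G(y).
\]
Here $f_0$ drops out and only the $\beta$ shift survives. Hence $\overline{\varTheta}(l)=\tfrac{\beta l}{2}\fint_{\mathbb{S}} n^2 G(ln)\,dn$.

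\emph{Step 2 ($\mathfrak{Q}$ in terms of $G$).} Write $\omega=\partial_1u^2-\partial_2u^1$. For $u^2(x)\omega(x+y)$, the $x$-derivatives act on the shifted factor and become $\partial_{y}$-derivatives. For $u^2(x+y)\omega(x)$, integrating by parts in $x$ gives $-\partial_y$. This yields
\[
\mathfrak{Q}(y)=\tfrac{\beta}{2}\big[\partial_{y^1}\Gamma_{22}-\partial_{y^2}\Gamma_{21}-\partial_{y^1}\Gamma_{22}+\partial_{y^2}\Gamma_{12}\big]=-\tfrac{\beta}{2}\,\partial_{y^2}G(y).
\]
Thus $\overline{\mathfrak{Q}}(l)=-\tfrac{\beta}{2}\fint_{\mathbb S}(\partial_2 G)(ln)\,dn$, and both statements concern the same object $G$.

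\emph{Step 3 (incompressibility).} From $\nabla\cdot u=0$, applied at $x+y$ and (after integrating by parts in $x$) at $x$, we get $\partial_{y^j}\Gamma_{ij}=0$ and $\partial_{y^i}\Gamma_{ij}=0$. The change of variables $x\mapsto x-y$ gives $\Gamma(-y)=\Gamma(y)^T$, so $G$ is odd. I will then try to express $G$ as an exact derivative. The cleanest route is a stream function: $u=\nabla^\perp\psi$ with $\psi=0$ below the channel and $\psi=Q(t)$ (the zonal flux) above it. With $\Psi(y)=\mathbb{E}\fint\psi(x)\psi(x+y)\,dx$, suitably renormalized, one formally gets $\Gamma_{12}=\Gamma_{21}=\partial_1\partial_2\Psi$, i.e.\ $G\equiv 0$ up to a correction carried by the flux $Q$. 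Since $\psi$ depends only on $x^2$ outside the channel, I expect this correction to be a function of $y^2$ alone, or an $\partial_{y^1}$-derivative. Either form integrates to zero against $n^2$ on circles by oddness and Gauss's theorem on the disk $B_l$, using $\int_{\partial B_l} n_j F\,dS=\int_{B_l}\partial_jF\,dy$ together with the divergence identities above. For $\overline{\mathfrak{Q}}$, I will similarly write $\int_{\partial B_l}\partial_2G\,dS=\frac{d}{dl}\big(\cdot\big)$ of $\int_{B_l}\partial_2 G\,dy=\int_{\partial B_l}n^2 G\,dS$, which is exactly the quantity from Step 1. So the vanishing of $\overline{\varTheta}$ for all $l$ should imply that of $\overline{\mathfrak{Q}}$.

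\emph{Main obstacle.} The hard part is Step 3: proving $\int_{\mathbb S}n^2G(ln)\,dn=0$ rigorously despite the nonzero zonal flux, for which the global stream function does not vanish on both walls. My first attempt will work directly with $\Gamma$ and its two divergence-free identities. I will apply Gauss's theorem to the vector fields $y\mapsto(\Gamma_{i1},\Gamma_{i2})$ and $y\mapsto(\Gamma_{1j},\Gamma_{2j})$ on $B_l$, and combine the resulting circle identities to isolate $\oint n^2(\Gamma_{21}-\Gamma_{12})$. If this does not close, I will fall back on the stream function with its explicit flux correction, using the compact support of $\Psi$ in $y^1$-periodic and $y^2$-bounded form to discard the remaining boundary contributions.
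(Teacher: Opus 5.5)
Your reductions in Steps 1 and 2 are correct: $\mathrm{Tr}\,\varTheta(y)=\tfrac12\beta y^2G(y)$ and $\mathfrak{Q}(y)=-\tfrac{\beta}{2}\partial_{y^2}G(y)$. The step you call the main obstacle is not one. It follows in two lines from the divergence identities you already wrote down, with no stream function and no flux correction. The first-row identity ($i=1$) and the first-column identity ($j=1$) read
\[
\partial_{y^1}\Gamma_{11}+\partial_{y^2}\Gamma_{12}=0,\qquad \partial_{y^1}\Gamma_{11}+\partial_{y^2}\Gamma_{21}=0,
\]
and subtracting them gives $\partial_{y^2}G\equiv 0$. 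The same subtraction with $i=j=2$ gives $\partial_{y^1}G\equiv 0$. So $G$ is constant, and since it is odd (or since $G(y)=0$ once $|y^2|>b-a$), $G\equiv 0$.

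Hence $\mathrm{Tr}\,\varTheta(y)=0$ and $\mathfrak{Q}(y)=0$ for every $y$. This is the pointwise statement the paper actually proves, and it gives $\overline{\varTheta}=\overline{\mathfrak{Q}}=0$ a fortiori. Your integrated variant closes just as fast: Gauss on $B_l$ for the first row and first column gives $\oint_{\partial B_l}(n^1\Gamma_{11}+n^2\Gamma_{12})\,dS=0=\oint_{\partial B_l}(n^1\Gamma_{11}+n^2\Gamma_{21})\,dS$, and you subtract. The zonal flux plays no role, because the only input is that the zero extension of $u$ is a divergence-free $H^1$ field on $\mathbb{T}\times\mathbb{R}$. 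As written, though, Step 3 is a hedge with a speculative fallback, so you should replace it by this computation.

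The paper reaches the same pointwise identities differently. It writes $u=\nabla^\perp\psi$ with $\psi\in H^3$, invoking Theorem~\ref{thm:H2_regularity}. It then splits $\mathfrak{Q}$ into zonal and meridional pieces and integrates by parts separately in $x^1$ and $x^2$ on the channel, killing boundary terms with periodicity and no-slip. The flux constant $C_b=\psi|_{x^2=b}$ does appear there, but only multiplied by $\fint_{\mathbb{T}}\partial_1\psi\,dx^1=0$. That computation makes the separate roles of periodicity, no-slip and vanishing meridional transport explicit.

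Your route packages all those boundary terms into one fact. It also shows that the two vanishing statements are really one: the symmetry of $\Gamma$, a general property of two-dimensional solenoidal fields (in Fourier variables $\hat u(k)\parallel k^\perp$, so $\hat\Gamma(k)$ is real symmetric). Finally, it needs only $\mathbb{E}\|u\|_{H^1}^2<\infty$, which the energy balance already provides, rather than the $H^2$ bound that requires $\varepsilon_1<\infty$.
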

As we demonstrate in Section \ref{sec:vanishing}, this result arises from a stronger pointwise identity. We prove that $\text{Tr}(\varTheta(y)) = 0$ and $\mathfrak{Q}(y) = 0$ for every separation vector $y \in \mathbb{R}^2$, implying that the Coriolis force does not contribute to the isotropic flux balances. Planetary rotation neither injects nor dissipates energy or enstrophy, acting as a geometric constraint.

\subsection{Dual Cascade under Weak Anomalous Dissipation}\label{sec:main_cascade}

The goal of this section is to present two theorems showing that the Coriolis force does not affect the behavior of certain ``averaged" third-order structure functions. 
The intuition is clear from the previous section's observation that the Coriolis terms $\overline{\varTheta} = 0$ and $\overline{\mathfrak{Q}} = 0$ vanish and the new spherically averaged KHM reduces to that of the non-rotating fluid. 
 
This cancellation allows us to recover the result of \cite{YuriAmiraliGigliola} with a weaker assumption. In \cite{YuriAmiraliGigliola} we proved certain asymptotics for the ``third order structure" function at small scales, assuming that the expectation of the vorticity is uniformly bounded in $\alpha,\nu$. Here instead we prove the same result assuming the so-called weak anomalous dissipation (WAD) introduced by \cite{bed2D} which is weaker than assuming that the vorticity is uniformly bounded.

Since the dual cascade requires an unbounded inertial range, we work in a large-domain limit. We define the expanding sequence of channels
$\mathcal{D}_\alpha = \mathbb{T}_{L(\alpha)} \times [a(\alpha), b(\alpha)]$,
assuming that $L=L(\alpha)$ and $b=b(\alpha)$ are continuous monotone
decreasing functions of $\alpha \in (0,1]$ with $\lim_{\alpha \to 0}L =
\lim_{\alpha \to 0}b = \infty$, and that $a=a(\alpha)$ is continuous monotone
increasing with $\lim_{\alpha \to 0}a = -\infty$. The channel width is denoted
by $H(\alpha) = b(\alpha) - a(\alpha)$.

The quantities of interest are the spherically averaged third-order structure
functions associated with the extended fields:
\begin{align}
    \overline{\mathfrak{D}}_{\mathbb{R}}(l) &:= \frac{1}{|\mathcal{D}_\alpha|}
    \mathbb{E}\fint_{\mathbb{S}} \int_{\mathbb{T}_{L(\alpha)} \times \mathbb{R}}
    |\delta_{ln}\omega(x)|^2 (\delta_{ln}u(x) \cdot n) \, dx \, dn,
    \label{def:D_vorticity_ext}\\
    \overline{D}_{\mathbb{R}}(l) &:= \frac{1}{|\mathcal{D}_\alpha|}
    \mathbb{E}\fint_{\mathbb{S}} \int_{\mathbb{T}_{L(\alpha)} \times \mathbb{R}}
    |\delta_{ln}u(x)|^2 (\delta_{ln}u(x) \cdot n) \, dx \, dn.
    \label{def:D_velocity_ext}
\end{align}

\subsubsection{Weak anomalous dissipation  (WAD)}
Applying Itô's formula to the energy and enstrophy functionals yields the stationary balances \cite{YuriAmiraliGigliola},
\begin{align}
    \alpha \mathbb{E}\|u\|_{L^2(\mathcal{D})}^2 + \nu \mathbb{E}\|\nabla u\|_{L^2(\mathcal{D})}^2 &= \varepsilon, \label{eq:energy_balance} \\
    \alpha \mathbb{E}\|\omega\|_{L^2(\mathcal{D})}^2 + \nu \mathbb{E}\|\nabla \omega\|_{L^2(\mathcal{D})}^2 &= \eta. \label{eq:enstrophy_balance}
\end{align}
The WAD framework hypothesizes that in the inviscid limit, all injected energy is dissipated by the large-scale friction $\alpha$, and all injected enstrophy is dissipated by the small-scale viscosity $\nu$. Formally, any family of stationary solutions $\{u_{\nu,\alpha}\}_{\nu,\alpha > 0}$ is assumed to satisfy
\begin{align}
    \lim_{\nu \to 0} \sup_{\alpha \in (0,1)} \nu \mathbb{E}\| \omega\|_{L^2(\mathcal{D})}^2 &= 0, \label{eq:WAD_energy} \\
    \lim_{\alpha \to 0} \sup_{\nu \in (0,1)} \alpha \mathbb{E}\| \omega\|_{L^2(\mathcal{D})}^2 &= 0. \label{eq:WAD_enstrophy}
\end{align}
Let us emphasize that we use the WAD assumption only in this Section, and result of Section \ref{sec:khm_anti} does not use this assumption.

\subsubsection{Dual Cascade}

Assuming WAD, the dual cascade emerges independently of the planet's rotation:

\begin{theorem}[Direct Enstrophy Cascade]\label{thm:enstrophy_cascade} 
Let $\{u_{\nu,\alpha}\}_{\nu,\alpha > 0}$ be a sequence of statistically
stationary solutions satisfying the WAD conditions
\eqref{eq:WAD_energy}-\eqref{eq:WAD_enstrophy} and assume that the forcing satisfies \eqref{eq:noise_high_freq}-\eqref{eq:noise_low_freq}. Then, there exists a sequence
of dissipation scales $l_\nu \to 0$ satisfying
\begin{equation}\label{eq:dissipation_scale_WAD}
    \sup_{\alpha \in (0,1)} \big( \nu \mathbb{E}\|\omega\|_{L^2(\mathcal{D}_\alpha)}^2 \big)^{1/2} = o(l_\nu),
\end{equation}
such that the direct enstrophy cascade laws hold:
\begin{align}
    \lim_{l_I \to 0} \limsup_{\nu,\alpha \to 0} \sup_{l \in [l_\nu, l_I]} \left| \frac{\overline{\mathfrak{D}}_{\mathbb{R}}(l)}{l} + 2\eta \right| &= 0, \label{eq:cascade_mixed} \\
    \lim_{l_I \to 0} \limsup_{\nu,\alpha \to 0} \sup_{l \in [l_\nu, l_I]} \left| \frac{\overline{D}_{\mathbb{R}}(l)}{l^3} - \frac{1}{4}\eta \right| &= 0. \label{eq:cascade_velocity}
\end{align}
\end{theorem}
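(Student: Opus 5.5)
The plan follows the Bedrossian et al.\ approach \cite{bed2D} for the non-rotating problem. The only genuinely new input is Theorem \ref{thm:vanishing_coriolis}, which removes the rotation term from the spherically averaged identities.

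\textbf{Step 1: radial identity for the mixed law.} Start from the vorticity KHM relation of Proposition \ref{prop:vorticity_KHM}. Integrate it over the ball $B_l=\{|y|\le l\}$ and apply the divergence theorem. This gives
\begin{equation*}
\overline{\mathfrak{D}}_{\mathbb{R}}(l)\, 2\pi l = \int_{B_l}\big(-4\nu\Delta_y\mathfrak{C}+4\alpha\mathfrak{C}+4\mathfrak{Q}-4\mathfrak{a}\big)\,dy,
\end{equation*}
up to the normalisation $|\mathcal{D}_\alpha|$ versus $\fint$, which has to be tracked because $\mathfrak{D}$ is sensitive to the choice of integration domain. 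By Theorem \ref{thm:vanishing_coriolis}, $\mathfrak{Q}\equiv0$, so the rotation term drops out exactly. Dividing by $2\pi l^2$, we must show that each remaining contribution is $o(1)$ or tends to $-2\eta$, uniformly for $l\in[l_\nu,l_I]$.

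\textbf{Step 2: estimate the three remaining terms.}
\begin{itemize}
\item \emph{Viscous term.} It equals $-4\nu\, 2\pi l\,\partial_l\overline{\mathfrak{C}}(l)$. Since $|\nabla_y\mathfrak{C}|\le \mathbb{E}\|\omega\|\,\|\nabla\omega\|$, Cauchy--Schwarz gives a bound of order
\begin{equation*}
l^{-1}\big(\nu\mathbb{E}\|\omega\|^2\big)^{1/2}\big(\nu\mathbb{E}\|\nabla\omega\|^2\big)^{1/2}\le l^{-1}\eta^{1/2}\big(\nu\mathbb{E}\|\omega\|^2\big)^{1/2}.
\end{equation*}
This is $o(1)$ for $l\ge l_\nu$ by the choice \eqref{eq:dissipation_scale_WAD}. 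Such an $l_\nu\to0$ exists precisely because of \eqref{eq:WAD_energy}.
\item \emph{Damping term.} It is bounded by $\alpha\mathbb{E}\|\omega\|^2$, which vanishes by \eqref{eq:WAD_enstrophy}.
\item \emph{Noise term.} Use the forcing hypotheses \eqref{eq:noise_high_freq}--\eqref{eq:noise_low_freq}, which I expect to give $\mathfrak{a}$ uniform $C^2$ control near $0$ on the scale $l_I$. Then $\mathfrak{a}(y)=\eta+O(|y|^2\cdot)$, and the ball average tends to $\eta$ as $l_I\to0$. The factor $4\eta\cdot\pi l^2/(2\pi l^2)=2\eta$ gives \eqref{eq:cascade_mixed}.
\end{itemize}

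\textbf{Step 3: velocity law.} Take the trace of Proposition \ref{prop:weak_KHM} with $\phi=\psi(|y|)\,\mathrm{Id}$ and pass to radial test functions, obtaining an ODE in $l$ for $\overline{D}_{\mathbb{R}}$. The trace of the pressure term vanishes by incompressibility. The rotation term $\overline{\varTheta}$ vanishes by Theorem \ref{thm:vanishing_coriolis}. This leaves
\begin{equation*}
\partial_l\big(l\,\overline{D}_{\mathbb{R}}\big)\ \text{expressed through}\ -4\nu\Delta\overline{\Gamma}+4\alpha\overline{\Gamma}-4\overline{a}.
\end{equation*}
Here the second-order Taylor expansion of $\overline{a}$ is needed, namely $\mathrm{tr}\,a(y)=\varepsilon-\tfrac{\eta}{4}|y|^2\cdot(\ldots)+o(|y|^2)$. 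Since the $\varepsilon$ and $\alpha\Gamma$ contributions cancel only at leading order, it is easier to proceed as in \cite{bed2D}. One rewrites the flux with the vorticity form, using $\Delta_y\operatorname{tr}\Gamma=-\mathfrak{C}$, and integrates the mixed law twice in $l$. The viscous and damping errors are then the ones already controlled in Step 2, now carrying an extra factor $l^2$, and this yields the limit $\tfrac14\eta$ in \eqref{eq:cascade_velocity}.

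\textbf{Main obstacle.} I expect the hardest part to be the boundary layer and domain normalisation. The extended fields vanish outside $\mathcal{D}_\alpha$, so the increments straddle the walls, and $\overline{\mathfrak{D}}_{\mathbb{R}}$ differs from the in-domain average; this is why the statement uses $\mathbb{T}\times\mathbb{R}$. With this choice the KHM identities hold exactly, and it remains to check that the boundary contribution to the second-order quantities $\Gamma$ and $\mathfrak{C}$ is harmless. I would first try bounding it by $l/H(\alpha)$ times the energy or enstrophy, which tends to zero as $\alpha\to0$ since $l\le l_I$.
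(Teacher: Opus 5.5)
Your proposal is correct and is essentially the paper's argument. The paper also invokes Theorem \ref{thm:vanishing_coriolis} to remove the Coriolis terms from \eqref{eq:sf_velocity_master}--\eqref{eq:sf_vv}, then runs the non-rotating argument of \cite{bed2D}, which your Steps 2--3 spell out (the noise term gives $-2\eta$, and the energy balance \eqref{eq:energy_balance} turns the constant $4\alpha\overline{\Gamma}(0)-4\overline{a}(0)$ into $-4\nu\mathbb{E}\|\nabla u\|^2$, harmless for $l\ge l_\nu$ in the velocity law). Your ``main obstacle'' is not one: $\Gamma$, $\mathfrak{C}$, $\mathfrak{a}$ are the same whether averaged over $\mathcal{D}_\alpha$ or $\mathbb{T}\times\mathbb{R}$, and the statement uses the extended fluxes precisely so that no boundary-layer estimate is needed (Corollary \ref{cor:channel} handles that comparison separately).
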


\begin{theorem}[Inverse Energy Cascade]\label{thm:inverse_cascade} 
Under the same assumptions of Theorem \ref{thm:enstrophy_cascade}, there exists
a sequence of friction scales $l_\alpha \to \infty$ satisfying
\begin{equation}\label{eq:friction_scale_WAD}
    l_\alpha^2 = o\left(\left(\sup_{\nu\in(0,1)}\alpha\mathbb{E}\|\omega\|_{L^2(\mathcal{D}_\alpha)}^2\right)^{-1}\right), 
\end{equation}
such that the inverse energy cascade law holds:
\begin{equation}\label{eq:cascade_energy_inverse}
    \lim_{l_I \to \infty} \limsup_{\nu,\alpha \to 0}  \sup_{l \in [l_I, l_\alpha]} \left| \frac{\overline{D}_{\mathbb{R}}(l)}{l} - 2\varepsilon \right| = 0.
\end{equation}
\end{theorem}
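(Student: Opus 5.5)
We follow the strategy of \cite{bed2D} for the inverse cascade, adapted to the channel geometry through Proposition \ref{prop:weak_KHM} and Theorem \ref{thm:vanishing_coriolis}. First we test \eqref{eq:Master_KHM} against radial matrix-valued test functions $\phi(y)=\psi(|y|)\,\mathrm{Id}$. Then $\phi:\varTheta=\psi\,\mathrm{Tr}\,\varTheta$, and by the pointwise identity $\mathrm{Tr}\,\varTheta(y)=0$ from Theorem \ref{thm:vanishing_coriolis} the Coriolis term drops out. The pressure term becomes $\psi\,\mathrm{Tr}\,\Pi$. Since $\nabla p$ and $u$ have been extended by zero to $\mathbb{T}\times\mathbb{R}$, we have $\mathrm{Tr}\,\Pi(y)=\tfrac12\mathbb{E}\fint [\nabla p(x)\cdot u(x+y)+u(x)\cdot\nabla p(x+y)]\,dx=-\nabla_y\cdot(\ldots)$. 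Integrating by parts in $x$ and using $\nabla\cdot u=0$ together with $u=0$ on $\partial\mathcal{D}$ shows that it vanishes. Writing everything in polar coordinates and letting $\psi$ concentrate, we obtain the spherically averaged scalar KHM relation
\begin{equation*}
\frac{1}{l}\frac{d}{dl}\big(l\,\overline{D}_{\mathbb{R}}(l)\big)\cdot\tfrac14 = -2\alpha\overline{\Gamma}(l)+2\nu\,\overline{\Delta\Gamma}(l)+2\overline{a}(l),
\end{equation*}
understood in the distributional sense and then pointwise, since $H^2$ regularity (Theorem \ref{thm:H2_regularity}) gives enough smoothness. Here overlines denote spherical averages of traces. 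Integrating from $0$ to $l$ yields an exact formula of the form
\begin{equation*}
\overline{D}_{\mathbb{R}}(l)=\frac{8}{l}\int_0^l r\big(\overline{a}(r)-\alpha\overline{\Gamma}(r)+\nu\overline{\Delta\Gamma}(r)\big)\,dr,
\end{equation*}
which is identical in form to the nonrotating one in \cite{bed2D}.

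The second step is the asymptotics of each term for $l\in[l_I,l_\alpha]$. For the noise term, $\mathrm{Tr}\,a(0)=\varepsilon$ and, by \eqref{eq:noise_low_freq}, the noise correlation decays at large separations. As a result, $\frac{1}{l}\int_0^l r\overline{a}\,dr$ is $O(l_I^{-1}\cdot)$-small relative to $l$ once $l\gg l_I$. I would rather use the splitting $\overline{a}(r)-\alpha\overline{\Gamma}(r)=[\overline a(r)-\varepsilon]+[\varepsilon-\alpha\overline\Gamma(0)]+\alpha[\overline\Gamma(0)-\overline\Gamma(r)]$ and follow \cite{bed2D}. In that splitting, $\alpha\overline\Gamma(0)=\alpha\mathbb{E}\|u\|^2\to\varepsilon$ follows from \eqref{eq:energy_balance} and \eqref{eq:WAD_energy} (using $\nu\mathbb{E}\|\nabla u\|^2\lesssim\nu\mathbb{E}\|\omega\|^2$ on the channel with no-slip). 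The main balance then reduces to $\overline D_{\mathbb{R}}(l)\approx\frac{8}{l}\cdot\frac{l^2}{2}\cdot\tfrac{\varepsilon}{2}$, giving $2\varepsilon l$ after tracking constants, with the sign fixed by the inverse cascade. I expect the noise term to require care: one must show that $\overline{a}(r)$ contributes a bounded quantity $\int_0^\infty r\,\overline a$, which is $o(l)$ for $l\ge l_I\to\infty$.

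The error terms come next. For the friction increment we use $|\overline\Gamma(0)-\overline\Gamma(r)|\lesssim r^2\mathbb{E}\|\nabla u\|^2\lesssim r^2\mathbb{E}\|\omega\|^2$. Hence $\frac{\alpha}{l}\int_0^l r^3\mathbb{E}\|\omega\|^2\,dr\lesssim l^3\alpha\mathbb{E}\|\omega\|^2=l\cdot o(1)$ uniformly for $l\le l_\alpha$, by the choice \eqref{eq:friction_scale_WAD}. For the viscous term, we integrate by parts once to get $\frac{\nu}{l}\int_0^l r\overline{\Delta\Gamma}\,dr=\nu\,\partial_l\overline\Gamma(l)$, which is bounded by $\nu(\mathbb{E}\|u\|^2)^{1/2}(\mathbb{E}\|\nabla u\|^2)^{1/2}\lesssim \sqrt{\nu\varepsilon/\alpha}\,(\nu\mathbb{E}\|\omega\|^2)^{1/2}$. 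Taking $\nu\to0$ faster than $\alpha$, as the $\limsup_{\nu,\alpha\to0}$ permits, and using \eqref{eq:WAD_energy}, this term is $o(1)$ and hence $o(l)$.

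The main obstacle is the geometry. The normalization by $|\mathcal{D}_\alpha|$ while integrating over $\mathbb{T}_{L(\alpha)}\times\mathbb{R}$, together with the zero extension across the no-slip walls, must not spoil the integrations by parts in $y$, nor the identities $\mathrm{Tr}\,\Pi=0$ and $\mathrm{Tr}\,a(0)=\varepsilon$. In addition, $\Gamma$ has to be evaluated consistently as $l\lesssim l_\alpha$ grows. I would check carefully that the zero extension keeps $u\in H^1(\mathbb{T}\times\mathbb{R})$ thanks to the Dirichlet condition, so that all the $y$-manipulations of Proposition \ref{prop:weak_KHM} remain valid. I would also require $H(\alpha),L(\alpha)\gg l_\alpha$ so that the bounded interval does not truncate the relevant separations.
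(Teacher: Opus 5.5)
Your route is the paper's: test \eqref{eq:Master_KHM} with $\Phi(|y|)I$, remove the Coriolis term using the pointwise identity $\mathrm{Tr}\,\varTheta\equiv 0$ of Theorem~\ref{thm:vanishing_coriolis}, and run the non-rotating asymptotics of \cite{bed2D}. The gap is in the step that fixes the constant. Testing \eqref{eq:Master_KHM} with $\phi=\Phi(|y|)I$ and integrating by parts in $l$ gives $(l\,\overline{D}_{\mathbb{R}})'=4l(\alpha\overline{\Gamma}-\overline{a})-4\nu(l\,\overline{\Gamma}')'$. This is \eqref{eq:sf_velocity_master}:
\[
\overline{D}_{\mathbb{R}}(l)=\frac{4}{l}\int_0^l r\big(\alpha\overline{\Gamma}(r)-\overline{a}(r)\big)\,dr-4\nu\,\overline{\Gamma}'(l).
\]
Your relation has $-2$ times this right-hand side. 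Fed into your own splitting, your formula gives the leading term $\frac{8}{l}\cdot\frac{l^2}{2}\cdot(-\varepsilon)=-4\varepsilon l$. The factor $\varepsilon/2$ and the sign ``fixed by the inverse cascade'' are not derived from anything. Since the theorem is precisely the value $2\varepsilon$, this is not cosmetic. With the correct relation, $\alpha\overline{\Gamma}(0)=\alpha\mathbb{E}\|u\|^2_{L^2}=\varepsilon-\nu\mathbb{E}\|\omega\|^2_{L^2}\to\varepsilon$ gives $\frac{4}{l}\cdot\frac{l^2}{2}\cdot\varepsilon=2\varepsilon l$ directly.

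Two of your error estimates also fail as written.

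\emph{Viscous term.} Your bound $\sqrt{\nu\varepsilon/\alpha}\,(\nu\mathbb{E}\|\omega\|^2)^{1/2}$ is not uniform in the joint limit. The $\limsup_{\nu,\alpha\to0}$ in the statement (as in \cite{bed2D}) includes regimes with $\alpha\ll\nu$, so you are not free to send $\nu\to0$ faster. Avoid $\mathbb{E}\|u\|^2\sim\varepsilon/\alpha$ entirely. Since $\nabla_y\mathrm{Tr}\,\Gamma(0)=\tfrac12\mathbb{E}\fint_{\mathbb{T}\times\mathbb{R}}\nabla|u|^2\,dx=0$ and $|\nabla_y^2\mathrm{Tr}\,\Gamma|\le\mathbb{E}\|\nabla u\|^2$, you get $|\overline{\Gamma}'(l)|\le l\,\mathbb{E}\|\omega\|^2$. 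Hence $\frac{4\nu}{l}|\overline{\Gamma}'(l)|\le 4\nu\mathbb{E}\|\omega\|^2\to0$ uniformly in $\alpha$ by \eqref{eq:WAD_energy}.

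\emph{Noise term.} What you need is $l^{-2}\int_0^l r\,\overline{a}(r)\,dr\to0$ as $l\to\infty$, uniformly in $\alpha$. A bound on $\int_0^\infty r\,\overline{a}$ is neither needed nor available in general: already for a single mode, $\int_0^l rJ_0(|k|r)\,dr=lJ_1(|k|l)/|k|$ oscillates with amplitude $\sim l^{1/2}$. Spectrally, $l^{-2}\int_0^l r\,\overline{a}\,dr$ is the noise spectrum averaged against $J_1(|k|l)/(|k|l)$. This weight is at most $\tfrac12$ on $|k|\le\delta$, where \eqref{eq:noise_low_freq} makes the mass small, and is $O((\delta l)^{-3/2})$ on $|k|>\delta$.

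Your friction-increment estimate via \eqref{eq:friction_scale_WAD} is fine.

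Finally, your argument for $\mathrm{Tr}\,\Pi=0$ is not valid as stated. The zero extension of $\nabla p$ is not the gradient of the zero extension of $p$. Integrating by parts over $\mathcal{D}$ shows that $\mathrm{Tr}\,\Pi(y)$ equals the wall term $\frac{1}{2|\mathcal{D}|}\mathbb{E}\int_{\partial\mathcal{D}}p\,\big(u(\cdot+y)+u(\cdot-y)\big)\cdot N\,dS$, with $N$ the outward normal. The no-slip condition kills $u$ on the walls but not the shifted field $u(\cdot\pm y)$. This term is even in $y$, so spherical averaging does not remove it either. The paper does not rederive this step; it imports the pressure-free relation \eqref{eq:sf_velocity_master} from \cite{YuriAmiraliGigliola}. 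So here you are not departing from its proof, but the vanishing does not follow from incompressibility and no-slip alone.
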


Notice that, unlike the periodic case $\mathbb{T}^2$ \cite{bed2D}, our method cannot prove the \textit{longitudinal} law in the presence of the Coriolis term. This is because applying the KHM framework to the longitudinal structure function $S_3^{\parallel}(l)$ requires controlling the longitudinal Coriolis term
\begin{equation}
    \overline{\varTheta}_{\parallel}(l) := \fint_\mathbb{S} (n \otimes n) : \varTheta(ln) \, dn.
\end{equation}
In contrast to the trace \eqref{eq:Theta_trace_main}, $\overline{\varTheta}_{\parallel}(l)$ does not vanish. A Taylor expansion analysis shows that its dominant error term requires bounding $\mathbb{E}[\|\Delta u\|_{L^2}\|\nabla u\|_{L^2}]$. However, under the WAD framework, we only have the $H^2$-bound $\nu \mathbb{E}\|\Delta u\|_{L^2}^2 \le \eta$. This forces the leading error coefficient to scale as $\nu^{-1/2}$. In the inviscid limit ($\nu \to 0$), this error over the inertial range diverges as $\mathcal{O}(l_I \nu^{-1/2})$. This demonstrates why  deriving the longitudinal law in the presence of the Coriolis force requires more sophisticated machinery.

\subsection{Antisymmetric KHM Balance}\label{sec:khm_anti}

To capture the large-scale formation of zonal anisotropy, we introduce a weighted, \textit{antisymmetric} test function. Let $\mathbb{S}$ be the unit circle and $n = (n^1, n^2) \in \mathbb{S}$ be the direction of the separation vector $y = l n$. We define 
\begin{equation}\label{eq:test_function}
    \phi(y) = \Psi(|y|)\frac{y^2}{|y|} \begin{pmatrix} 0 & 1 \\ -1 & 0 \end{pmatrix},
\end{equation}
where $\Psi$ is a smooth, compactly supported scalar function on $(0, \infty)$. 

Testing the weak KHM equation \eqref{eq:Master_KHM} against \eqref{eq:test_function} annihilates the symmetric quantities of the flow. Specifically, due to the symmetry of the third-order structure tensor $D^j(y)$ and the behavior of the spatial operators under this antisymmetric projection, the nonlinear inertial term, along with the viscous and Ekman dissipations, vanish. Furthermore, the boundary conditions on the divergence-free basis ensure that the stochastic forcing injects strictly zero antisymmetric correlation. The surviving terms yield a novel, exact identity governing the large-scale anisotropic dynamics of the flow.

\begin{theorem}[Geostrophic KHM Balance]\label{thm:antisymmetric_balance}
Let $u$ be a statistically stationary solution to the system \eqref{eq:NS_intro} in the sense of Theorem \ref{thm:stationary_measure}. Assume moreover that the forcing satisfies \eqref{eq:noise_high_freq}. Then the following balance holds for any separation scale $l > 0$:
\begin{align}\label{eq:CHS_antisymmetric}
    0 &= \beta l \mathbb{E} \fint_{\mathbb{S}}\fint_{\mathcal{D}} (n^2)^2 \big( u(x) \cdot u(x+ln) \big) \, dxdn \notag \\
    &\quad - \mathbb{E} \fint_{\mathbb{S}}\fint_{\mathcal{D}} n^2 \big( \nabla p(x) \cdot u^\perp(x+ln) - u^\perp(x) \cdot \nabla p(x+ln) \big) \, dxdn,
 \end{align}
 where $n=(n^1,n^2) \in \mathbb{S}$ is a unit vector.
\end{theorem}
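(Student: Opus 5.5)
We plan to test the weak KHM identity of Proposition \ref{prop:weak_KHM} against the antisymmetric test function \eqref{eq:test_function}, and to check term by term that only the Coriolis and pressure contributions survive. Write $J=\begin{pmatrix}0&1\\-1&0\end{pmatrix}$ and $\phi(y)=\psi(y)J$ with $\psi(y)=\Psi(|y|)y^2/|y|$ a scalar function. For any matrix $M$ we have $\phi:M=\psi\,(M_{12}-M_{21})$. Hence the whole identity reduces to the antisymmetric parts of $D^j,\Gamma,\varTheta,\Pi,a$.

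We first dispose of the symmetric terms. The tensor $D^j(y)$ is built from $\delta_yu\otimes\delta_yu$, so it is symmetric, and the flux term vanishes. For the other terms we use the change of variables $x\mapsto x-y$ on $\mathbb{T}\times\mathbb{R}$, which is legitimate for the zero extensions. It gives $\Gamma(y)^T=\Gamma(-y)$ and $a(y)^T=a(-y)$. Thus $\Gamma_{12}(y)-\Gamma_{21}(y)$ is odd in $y$. Since $\psi$ and $\Delta\psi$ are also odd in $y$, the Ekman and viscous terms do not cancel by parity alone. We will therefore need the following argument, which uses the structure of $\Gamma$. The antisymmetric part $\Gamma_{12}-\Gamma_{21}$ equals $\mathbb{E}\fint u(x)\cdot u^{\perp}(x+y)\,dx$ up to sign. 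Using incompressibility and the zero boundary values, we write $u=\nabla^\perp\Psi_s$ via a stream function. This correlation then becomes $\mathbb{E}\fint \nabla\Psi_s(x)\cdot\nabla^\perp\Psi_s(x+y)\,dx$, which is a divergence in $y$ of a Jacobian-type expression and integrates to zero against radial-times-$y^2$ weights. The same argument applies to $a$, where the forcing basis is divergence-free with zero boundary values. This is where the hypothesis \eqref{eq:noise_high_freq} enters: it gives enough regularity of $a$ to justify the manipulations. If this stream-function argument does not close, the fallback is to verify directly that $\int\psi\,(\Gamma_{12}-\Gamma_{21})\,dy=0$ by integrating in $x$ first and exploiting that $\fint_{\mathbb{T}\times\mathbb{R}}u^1(x)u^2(x+y)\,dx$ is symmetric under the exchange $u^1\leftrightarrow u^2$ combined with $y\to -y$.

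Next we compute the Coriolis term. We have $(fu^\perp)(x)=f(x)u^\perp(x)$ and $u\otimes u^\perp$ contracted with $J$. Using $a\otimes b:J=a^1b^2-a^2b^1$ and $u^\perp=(-u^2,u^1)$, one obtains $u(x)\otimes u^\perp(x+y):J=u(x)\cdot u(x+y)$ and $u^\perp(x)\otimes u(x+y):J=-u(x)\cdot u(x+y)$. Hence
\[
\varTheta:J=\tfrac12\mathbb{E}\fint\bigl(f(x+y)-f(x)\bigr)\,u(x)\cdot u(x+y)\,dx=\tfrac{\beta y^2}{2}\,\mathbb{E}\fint u(x)\cdot u(x+y)\,dx .
\]
The $f_0$ part cancels exactly, and only the $\beta$-shift survives. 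Multiplying by $\psi=\Psi(l)n^2$ and passing to polar coordinates $y=ln$ produces $\Psi(l)\,\beta l (n^2)^2\,\mathbb{E}\fint u(x)\cdot u(x+ln)$. Similarly, $\Pi:J$ gives $\tfrac12\,\mathbb{E}\fint\bigl(\nabla p(x)\cdot u^\perp(x+y)-u^\perp(x)\cdot\nabla p(x+y)\bigr)$, with signs to be fixed carefully.

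Collecting terms, we obtain
\[
\int_0^\infty \Psi(l)\,l\,\Bigl[\text{Coriolis}(l)-\text{Pressure}(l)\Bigr]\,dl=0
\]
for all $\Psi\in C_c^\infty(0,\infty)$. The bracket is continuous in $l$, by Theorem \ref{thm:H2_regularity} and the fact that $\nabla p\in L^2$. It therefore vanishes pointwise, and switching the domain of integration to $\mathcal{D}$ as in Section \ref{rem:quadratic_vs_cubic} yields \eqref{eq:CHS_antisymmetric}. We expect the main obstacle to be the rigorous vanishing of the viscous and Ekman contributions, and of the noise contribution, under the antisymmetric projection, since parity alone does not kill them.
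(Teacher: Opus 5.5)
Your overall architecture matches the paper's: you test \eqref{eq:Master_KHM} against $\Psi(|y|)n^2J$, kill the flux term by the symmetry of $D^j$, compute $\varTheta:J$ and $\Pi:J$, pass to polar coordinates and localize in $l$. Your Coriolis computation is correct. The gap is exactly at the step you flag as the main obstacle, the viscous, Ekman and noise terms.

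\textbf{The mechanism you propose for these terms does not work.} You argue that $\mathbb{E}\fint\nabla\Psi_s(x)\cdot\nabla^\perp\Psi_s(x+y)\,dx$ is a $y$-divergence and hence integrates to zero against radial-times-$y^2$ weights. For $F=\nabla_y\cdot G$ one only gets $\int\phi F\,dy=-\int\nabla\phi\cdot G\,dy$, and this has no reason to vanish for $\phi=\Psi(|y|)y^2/|y|$ or for $\Delta\phi$. Your fallback ($u^1\leftrightarrow u^2$ combined with $y\to-y$) just re-derives $\Gamma(y)^T=\Gamma(-y)$, i.e.\ the oddness you already showed is insufficient.

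\textbf{The missing idea is that the antisymmetric part vanishes pointwise:} $\Gamma_{12}(y)=\Gamma_{21}(y)$ for every $y$. Indeed, $u^\perp=-\nabla\Psi_s$. Moreover, $\Psi_s$ is $x^1$-periodic, because incompressibility makes $\int_{\mathbb{T}}u^2\,dx^1$ independent of $x^2$ and it vanishes at $x^2=a$; you need this fact and did not state it. Extend $\Psi_s$ by constants outside the channel. The zero extension of $u$ is then divergence-free on $\mathbb{T}\times\mathbb{R}$ and compactly supported in $x^2$. Hence, for each fixed $y$,
\[
\fint u(x)\cdot u^\perp(x+y)\,dx=-\fint u(x)\cdot\nabla_x\bigl[\Psi_s(x+y)\bigr]\,dx=\fint(\nabla\cdot u)(x)\,\Psi_s(x+y)\,dx=0,
\]
with no boundary terms. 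The paper does the same computation by integrating by parts to reach $\fint\psi(x)\bigl[\partial^2_{21}-\partial^2_{12}\bigr]\psi(x+y)\,dx=0$. The identical argument with the stream functions $\psi_j$ of the $e_j$ gives $a(y):J=0$ pointwise. Since $\phi$ and $\Delta\phi$ are scalar multiples of $J$, the viscous, Ekman and noise terms then vanish before any integration in $y$. (Your divergence observation is consistent with this, but only because the divergence itself is identically zero, which is precisely the point to prove.)

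\textbf{The sign of $\Pi:J$ is wrong as you wrote it.} From $A=\nabla p(x)\otimes u(x+y)$ and $B=u(x)\otimes\nabla p(x+y)$ one gets $A_{12}-A_{21}=-\nabla p(x)\cdot u^\perp(x+y)$ and $B_{12}-B_{21}=u^\perp(x)\cdot\nabla p(x+y)$. Hence
\[
\Pi:J=-\tfrac12\,\mathbb{E}\fint\bigl(\nabla p(x)\cdot u^\perp(x+y)-u^\perp(x)\cdot\nabla p(x+y)\bigr)\,dx,
\]
the opposite of your expression. Because \eqref{eq:Master_KHM} contains $-4\phi:\varTheta-4\phi:\Pi$, this sign is what produces the relative minus in \eqref{eq:CHS_antisymmetric}; with your sign you would get a plus. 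Your continuity remark for passing from the weak-in-$l$ identity to every $l>0$ is a useful addition that the paper leaves implicit.
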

As mentioned in the Introduction, equation \eqref{eq:CHS_antisymmetric} provides the formal statistical counterpart of the large-scale geostrophic balance. 
At small separation scales, the weight $\beta l$ renders the rotational term negligible. Thus, the statistical balance is unaffected by the planetary vorticity gradient $\beta$, recovering local isotropy within the small-scale turbulent cascade.

Conversely, as the separation scale $l$ increases and approaches the \textit{Rhines scale} $l_R \sim (\varepsilon/\beta^3)^{1/5}$ (see Appendix \ref{app:heuristics}), the $\beta$-effect becomes the dominant driving mechanism. In particular, the unbounded growth of $\beta l$ must be compensated by the pressure gradient. This forces a reorganization of the flow. To align the flow into large-scale zonal jets, the fluid suppresses meridional velocity fluctuations (as highlighted in \eqref{eq:supp}). Coupled with the cancellation of the Coriolis terms (Theorem \ref{thm:vanishing_coriolis}), this demonstrates that while planetary rotation preserves the magnitude of the inverse energy cascade, the $\beta$-effect alters the spatial distribution of the energy transfer.

To the best of our knowledge, Theorem \ref{thm:antisymmetric_balance} represents the first exact statistical balance for $\beta$-plane turbulence that captures the emergence of large-scale zonal jets. The proof is given in Section \ref{sec:zonal_dynamics}.

\section{Proofs of the Main Results}\label{sec:proofs}

This section is devoted to the derivations of the theorems stated in Section \ref{sec:main_results}. We proceed as follows: first, we establish the vanishing of the spherically averaged Coriolis terms. Next, we formalize the dual cascade. Finally, we extract the antisymmetric balance.

\subsection{Proof of Theorem \ref{thm:vanishing_coriolis}}\label{sec:vanishing}
We begin by demonstrating that both the enstrophy Coriolis correlation $\mathfrak{Q}(y)$ defined in \eqref{def:coriolis_vort_corr} and the trace of the velocity Coriolis correlation $\text{Tr}(\varTheta(y))$ defined in \eqref{def:Theta} vanish for any statistically stationary incompressible flow in the periodic channel. These results confirm that the Coriolis force does not contribute to the net isotropic transfer of energy or enstrophy at any scale.

Let $u$ be a statistically stationary solution to \eqref{eq:abstract_NS}. As described in Section \ref{sec:weak_khm}, the fields are extended by zero outside the domain $\mathcal{D} = \mathbb{T} \times I$, where $I = [a, b]$. However, since $\operatorname{supp}(u) \subset \bar{\mathcal{D}}$, we restrict our integrations to the channel $\mathcal{D}$, outside of which the integrands identically vanish.

By the $H^2$ regularity established in Theorem \ref{thm:H2_regularity}, there exists a stream function $\psi \in H^3(\mathcal{D})$ such that
\begin{equation}\label{eq:stream_function}
    u = (-\partial_2 \psi, \partial_1 \psi) \quad \text{and} \quad \omega = \Delta \psi.
\end{equation}
Notice that we extend $\psi$ to $\mathbb T\times\mathbb R$ by constants on the exterior of $\mathcal{D}$, so that $\nabla^\perp\psi$ coincides with the zero extension of $u$.

\begin{proof}[Proof of Theorem \ref{thm:vanishing_coriolis}]
We split the proof into two steps. 
Before proceeding, we note that the stream function $\psi$ is periodic in the zonal direction $x^1 \in \mathbb{T}$. Indeed, integrating the divergence-free condition over $\mathbb{T}$ yields
\begin{equation}
    \partial_2 \int_{\mathbb{T}} u^2 dx^1 = 0,
\end{equation}
that is,
\begin{equation}\label{eq:integral}
    \int_{\mathbb{T}} u^2 dx^1 = C
\end{equation}
for $x^2\in[a,b]$. Since $u^2 = 0$ at the boundary $x^2=a$, the integral \eqref{eq:integral} vanishes everywhere, ensuring
\begin{equation}
    \psi(L, x^2) - \psi(0, x^2) = \int_{\mathbb{T}} u^2 dx^1 = 0.
\end{equation}

\noindent\textbf{Step 1: Vanishing of the Coriolis enstrophy flux $\mathfrak{Q}(y)=0$.}\\
Restricting the domain of integration to the periodic channel, the enstrophy Coriolis correlation \eqref{def:coriolis_vort_corr} reads 
\begin{equation}\label{eq:Q_restricted}
    \mathfrak{Q}(y) = \frac{1}{2}\beta \, \mathbb{E} \fint_{\mathbb{T} \times [a, b]} \big( u^2(x) \omega(x+y) + \omega(x) u^2(x+y) \big) \, dx.
\end{equation}
Substituting $\omega = \partial_{11}^2 \psi + \partial_{22}^2 \psi$, the integral decomposes into a zonal component $Z(y)$ and a meridional component $M(y)$:
\begin{align*}
    \fint_{\mathbb{T} \times [a, b]} \big[ \partial_1 \psi(x) \Delta \psi(x+y) + \Delta \psi&(x) \partial_1 \psi(x+y) \big] \, dx\\
    =&\fint_{\mathbb{T} \times [a, b]} \big[ \partial_1 \psi(x) \partial_{11}^2 \psi(x+y) + \partial_{11}^2 \psi(x) \partial_1 \psi(x+y) \big] \, dx\\
    +&\fint_{\mathbb{T} \times [a, b]} \big[ \partial_1 \psi(x) \partial_{22}^2 \psi(x+y) + \partial_{22}^2 \psi(x) \partial_1 \psi(x+y) \big] \, dx\\
    =: &Z(y) + M(y).
\end{align*}
For $Z(y)$, integrating the second term by parts with respect to $x^1$ shifts one derivative from $x$ to $x+y$. Because $\psi$ is periodic in $x^1$, the boundary terms vanish, yielding $Z(y) = 0$.

For $M(y)$, we integrate both terms by parts with respect to $x^2 \in [a, b]$,
\begin{align*}
    \fint_a^b \partial_1 \psi(x) \partial_{22}^2 \psi(x+y) \, dx^2 &= \Big[ \partial_1 \psi(x) \partial_2 \psi(x+y)\Big]_{a}^{b} - \fint_a^b \partial_{12}^2 \psi(x) \partial_2 \psi(x+y) \, dx^2, \\
    \fint_a^b \partial_{22}^2 \psi(x) \partial_1 \psi(x+y) \, dx^2 &= \Big[ \partial_2 \psi(x) \partial_1 \psi(x+y) \Big]_{a}^{b} - \fint_a^b \partial_{2} \psi(x) \partial_{21}^2 \psi(x+y) \, dx^2.
\end{align*}
The no-slip boundary condition \eqref{eq:boundary_conditions} ensures $u^2 = \partial_1 \psi = 0$ and $u^1 = -\partial_2 \psi = 0$ for $x^2 \in \{a, b\}$, canceling the two brackets. We are left with
\begin{equation*}
    M(y) = \fint_{\mathbb{T} \times I} \big[ - \partial_{12}^2 \psi(x) \partial_2 \psi(x+y) - \partial_2 \psi(x) \partial_{21}^2 \psi(x+y) \big] \, dx.
\end{equation*}
Integrating the first term by parts along the periodic zonal direction $x^1$, we get 
\begin{equation*}
    - \fint_a^b dx^2 \fint_{\mathbb{T}} \partial_1 \big(\partial_2 \psi(x)\big) \partial_2 \psi(x+y) \, dx^1 = \fint_{\mathbb{T} \times [a, b]} \partial_2 \psi(x) \partial_{12}^2 \psi(x+y) \, dx,
\end{equation*}
where the boundary term vanishes by periodicity. Since the $H^3$ regularity of the stream function $\psi$ guarantees the commutation of the derivatives in the weak sense $\partial_{12}^2 \psi = \partial_{21}^2 \psi$ in $L^2$, this remaining term cancels the second term of $M(y)$, yielding $M(y) = 0$. Therefore $\mathfrak{Q}(y)=0$ for all $y\in \mathbb{R}^2$, and spherical averaging yields $\overline{\mathfrak{Q}}(l)=0$. 

\text{}

\noindent \textbf{Step 2: Vanishing of the velocity Coriolis trace $\text{Tr}(\varTheta(y))=0$.}\\
Recalling $u^\perp = (-u^2, u^1)$ and the definition \eqref{def:Theta}, we write
\begin{equation}
    \text{Tr}(\varTheta(y)) = \frac{1}{2}\mathbb{E} \fint_{\mathbb{T}\times [a,b]} \big(u(x) \cdot (fu^\perp)(x+y) + (f u^\perp)(x) \cdot u(x+y)\big)\, dx.
\end{equation}
Evaluating the shift $f(x+y) - f(x) = \beta y^2$, the uniform rotation component $f_0$ cancels, and we obtain
\begin{align}
    \text{Tr}(\varTheta(y)) &= \frac{1}{2}\beta y^2 \, \mathbb{E} \fint_{\mathbb{T} \times I} \big( u^2(x)u^1(x+y) - u^1(x)u^2(x+y) \big) \, dx\notag \\
    &= \frac{1}{2}\beta y^2 \, \mathbb{E} \fint_{\mathbb{T} \times I} \big( -\partial_1 \psi(x) \partial_2 \psi(x+y) + \partial_2 \psi(x) \partial_1 \psi(x+y) \big) \, dx.\label{eq:trace_psi}
\end{align}
The no-slip boundary condition \eqref{eq:boundary_conditions} implies $\psi$ is constant for $x^2\in \{a,b\}$. We set $\psi(x^1, a) = 0$. On $x^2=b$, the constant is given by the net zonal transport,
\begin{equation}
    \psi(x^1, b) = - \fint_a^b u^1 dx^2 := C_b.
\end{equation}
Integrating the first term of \eqref{eq:trace_psi} by parts with respect to $x^1$, we derive
\begin{equation}\label{eq:int1}
    \fint_a^b dx^2 \fint_{\mathbb{T}} -\partial_1 \psi (x)\partial_2 \psi(x+y) \, dx^1 = \fint_{\mathbb{T} \times [a, b]} \psi(x) \, \partial_{12}^2 \psi(x+y) \, dx,
\end{equation}
where again by periodicity the boundary term vanishes. Integrating the second term of \eqref{eq:trace_psi} by parts with respect to $x^2$, we get
\begin{align}
    \fint_{\mathbb{T}} dx^1 \fint_a^b \partial_2 \psi(x) \partial_1\psi&(x+y) \, dx^2 \label{eq:int2}\\
    &= \fint_{\mathbb{T}} dx^1 \left( - \fint_a^b \psi(x) \, \partial_{21}^2 \psi(x+y) \, dx^2 + \Big[ \psi(x) \, \partial_1 \psi(x+y) \Big]_{x^2=a}^{x^2=b} \right)\notag \\
    &= - \fint_{\mathbb{T} \times [a, b]} \psi(x) \, \partial_{21}^2 \psi(x+y) \, dx + C_b \fint_{\mathbb{T}} \partial_1 \psi(x^1+y^1, b+y^2) \, dx^1.\notag
\end{align}
Notice that the boundary term evaluated at $x^2=b$ is the integral of a derivative over the periodic domain $\mathbb{T}$. Hence,
\begin{equation*}
    C_b \fint_{\mathbb{T}} \partial_1 \psi(x^1+y^1, b+y^2) \, dx^1 = C_b \big[ \psi(L+y^1, b+y^2) - \psi(y^1, b+y^2) \big] = 0.
\end{equation*}
Summing the two integrated components \eqref{eq:int1} and \eqref{eq:int2}, all boundary terms vanish independently of the constant $C_b$, yielding 
\begin{equation}\label{eq:coclusionvanish}
    \fint_{\mathbb{T} \times [a, b]} \psi(x) \big[ \partial_{12}^2 - \partial_{21}^2 \big] \psi(x+y) \, dx = 0.
\end{equation}
Therefore, $\text{Tr}(\varTheta(y)) = 0$ for all $y \in \mathbb{R}^2$, implying $\overline{\varTheta}(l) = 0$.
\end{proof}

\subsection{Proofs of Theorem \ref{thm:enstrophy_cascade} and Theorem \ref{thm:inverse_cascade}}\label{sec:thermodynamic}
Throughout this section we work in the large-domain limit introduced in Section \ref{sec:main_cascade}.

We require uniform regularity of the stochastic forcing. Assuming that the average injection rates $\varepsilon$ and $\eta$ are finite and independent of $\nu$ and $\alpha$, we require the profiles $\{b_je_j^\alpha\}$ of the noise \eqref{eq:stochasticprocess} to satisfy the following uniform bounds with respect to the expanding domain $\mathcal{D}_\alpha$,
\begin{align}
    \sup_{\alpha \in (0,1]} \sum_{j \in \mathbb{N}}b_j^2 \|\nabla^3 e_j^\alpha\|_{L^2(\mathcal{D}_\alpha)}^2 &\le C < \infty, \label{eq:noise_high_freq} \\
    \lim_{\delta \to 0} \sup_{\alpha \in (0,1]} \sum_{j \in \mathbb{N}} b_j^2 \|(e_j^\alpha)_{\le \delta}\|_{L^2(\mathcal{D}_\alpha)}^2 &= 0.\label{eq:noise_low_freq}
\end{align}
Here, $f_{\le \delta}$ denotes the restriction to frequencies $|k| \le \delta$. As in \cite{bed2D}, condition \eqref{eq:noise_high_freq} guarantees that the forcing $\varphi$ is uniformly bounded in $H^3(\mathcal{D}_\alpha)$, which is necessary to extract the enstrophy rate $\eta$ without diverging remainders. On the other hand, \eqref{eq:noise_low_freq} allows the noise correlation terms to decay to zero at large scales.

To derive the first spherically averaged KHM relation, we evaluate \eqref{eq:Master_KHM} against the test function $\phi(y) = \Phi(|y|)I$. Here, $\Phi$ is a smooth, compactly supported scalar function on $(0, \infty)$, and $I$ denotes the identity matrix. Then, recalling \eqref{def:D_velocity_ext} we have
\begin{align}
    \frac{\overline{D}_{\mathbb{R}}(l)}{l} &= -\frac{4\nu}{l}\overline{\Gamma}'(l) + \frac{4\alpha}{l^2}\int_0^l r\overline{\Gamma}(r) \, dr - \frac{4}{l^2}\int_0^l r\overline{a}(r) \, dr, \label{eq:sf_velocity_master}
\end{align}
where from definitions \eqref{def:Gamma} and \eqref{def:a}, we get 
\begin{align}
    \overline{\Gamma}(l)&=\fint_{\mathbb{S}}\trace\Gamma(ln)dn,\\
    \overline{a}(l)&=\fint_{\mathbb{S}}\trace a(ln)dn.
\end{align}
Similarly, starting from the vorticity balance \eqref{eq:vorticity_khm} and integrating both sides over $\{|y|\leq l\}$, we derive 
\begin{align}
    \frac{\overline{\mathfrak{D}}_{\mathbb{R}}(l)}{l} &= -\frac{4\nu}{l}\overline{\mathfrak{C}}'(l) + \frac{4\alpha}{l^2}\int_0^l r\overline{\mathfrak{C}}(r) \, dr - \frac{4}{l^2}\int_0^l r\overline{\mathfrak{a}}(r) \, dr, \label{eq:sf_vv}
\end{align}
where $\overline{\mathfrak{D}}_{\mathbb{R}}$ is defined in \eqref{def:D_vorticity_ext}
and from \eqref{def:vorticity_corr} and \eqref{def:noise_curl_corr} we have
\begin{align}
    \overline{\mathfrak{C}}(l)&=\fint_{\mathbb{S}}\mathfrak{C}(ln)dn,\\
    \overline{\mathfrak{a}}(l)&=\fint_{\mathbb{S}}\mathfrak{a}(ln)dn.
\end{align}
As we said in Section \ref{rem:quadratic_vs_cubic}, the correlations $\overline{\Gamma}$, $\overline{a}$, $\overline{\mathfrak{C}}$, $\overline{\mathfrak{a}}$ appearing on the right-hand sides of \eqref{eq:sf_velocity_master} and \eqref{eq:sf_vv}
may equivalently be evaluated on $\mathcal{D}_\alpha$. On
the other hand, the left-hand sides are the extended fluxes
\eqref{def:D_vorticity_ext}-\eqref{def:D_velocity_ext}. For full details on
the derivation of \eqref{eq:sf_velocity_master} and \eqref{eq:sf_vv}, see
\cite{YuriAmiraliGigliola}.

\begin{proof}[Proof of Theorems \ref{thm:enstrophy_cascade} and \ref{thm:inverse_cascade}]
Due to Theorem \ref{thm:vanishing_coriolis}, the spherically averaged Coriolis terms vanish identically, so that the balances \eqref{eq:sf_velocity_master}
and \eqref{eq:sf_vv} reduce to the non-rotating relations considered in
\cite{bed2D}. Since these balances are formulated on the translation-invariant
domain $\mathbb{T}_{L(\alpha)} \times \mathbb{R}$, taking $\nu, \alpha \to 0$ yields the dual cascade laws exactly as in \cite{bed2D}, using the energy and enstrophy balance \eqref{eq:energy_balance}, \eqref{eq:enstrophy_balance}. Finally, dividing the velocity balance \eqref{eq:sf_velocity_master} by $l^2$ recovers the scaling
\eqref{eq:cascade_velocity} of the direct enstrophy cascade.
\end{proof}

\subsubsection{Boundary layers and the fluxes on the channel} \label{sec: boundary layer}
In Section \ref{rem:quadratic_vs_cubic}, we explained that we have two choices to average quantities such as $D^j,\Gamma$, etc.. We also highlighted that the choice of averaging is only consequential for  $D^j$ and $\mathfrak{D}$. In fact, one can define:  
\begin{align}
    \overline{\mathfrak{D}}_{\alpha}(l) &:= \mathbb{E}\fint_{\mathbb{S}}
    \fint_{\mathcal{D}_\alpha} |\delta_{ln}\omega(x)|^2 (\delta_{ln}u(x) \cdot n)
    \, dx \, dn, \label{def:D_vorticity_phys}\\
    \overline{D}_{\alpha}(l) &:= \mathbb{E}\fint_{\mathbb{S}} \fint_{\mathcal{D}_\alpha}
    |\delta_{ln}u(x)|^2 (\delta_{ln}u(x) \cdot n) \, dx \, dn.
    \label{def:D_velocity_phys}
\end{align}
Notice that both pairs $\overline{\mathfrak{D}}_{\alpha}, \overline{\mathfrak{D}}_{\mathbb{R}}$, and $\overline{D}_{\alpha}, \overline{D}_{\mathbb{R}}$ are normalized by $|\mathcal{D}_\alpha|$, so that they are directly
comparable. Let us emphasize that the discrepancy among above mentioned pairs stems from the fact that we extend the fluid by zero outside the channel; therefore, the mentioned difference is solely arises from the behavior of the fluid inside a narrow boundary layer. Such a boundary layer has a thickness of at most $l_{\alpha}$ (for large scale quantities) and $l_{I}$ for small scale quantities. Taking $H(\alpha) \to \infty$ faster than $l_{\alpha}$ (which is the correct choice) means this boundary layer is negligible macroscopically. 

Moreover, from a physical perspective, we are interested in the cascade phenomena in the bulk and not the boundary effects. Effect of the boundary on the Navier Stokes equation has an extensive literature (see \cite{BardosTitiWiedemann2019,DrivasNguyen2018, Kato1984}) and it is out of the scope of our work. 

More importantly, we have chosen this boundary to reflect the fact that in the flow the symmetry is broken. However, in a more physical picture of a fluid on a rotating planet such boundary is fictitious. Consequently, we believe that such discrepancy should be considered as an artifact of the domain and not a physical quantity. 
Finally, let us mention that in our previous work \cite{YuriAmiraliGigliola} the assumption that the  boundary layer's contribution is negligible  was implicit.

Still, we provide the following quantitative comparison. Define
\begin{equation}\label{def:K}
    \mathcal{K}(l) := \big|\overline{D}_{\mathbb{R}}(l) - \overline{D}_{\alpha}(l)\big|,
    \qquad
    \mathcal{K}_{\omega}(l) := \big|\overline{\mathfrak{D}}_{\mathbb{R}}(l) - \overline{\mathfrak{D}}_{\alpha}(l)\big|.
\end{equation}
Since $u$ and $\omega$ vanish outside $\mathcal{D}_\alpha$, for $x \notin
\mathcal{D}_\alpha$ the increments reduce to $\delta_{ln}u(x) = u(x+ln)$ and
$\delta_{ln}\omega(x) = \omega(x+ln)$, so that the excess support is confined
to the exterior boundary layers
\begin{equation}
    \widetilde{\mathcal{B}}_{l,n}=\left\{x\notin\mathcal{D}_\alpha;\, x+ln \in \mathcal{D}_\alpha\right\}
\end{equation}
adjacent to $x^2 = a(\alpha)$ and $x^2 = b(\alpha)$. Hence
\begin{equation}
    \mathcal{K}(l) \le \frac{1}{|\mathcal{D}_\alpha|}\mathbb{E} \fint_\mathbb{S}\int_{\mathcal{B}_{l,n}} |u(x+ln)|^3 \, dx\,dn = \frac{1}{|\mathcal{D}_\alpha|}\mathbb{E} \fint_\mathbb{S}\int_{\widetilde{\mathcal{B}}_{l,n}} |u(y)|^3 \, dy\,dn,
\end{equation}
where $y=x+ln$ and $\widetilde{\mathcal{B}}_{l,n} \subset \mathcal{D}_\alpha$
is the shifted interior boundary layer. Setting
\begin{equation}
    \widetilde{\mathcal{B}}_l=\left\{y\in \mathcal{D}_\alpha;\, \text{dist}(y,\partial \mathcal{D}_\alpha)\leq l\right\},
\end{equation}
we have $\widetilde{\mathcal{B}}_{l,n} \subset \widetilde{\mathcal{B}}_{l}$ and
$|\widetilde{\mathcal{B}}_{l}| = 2lL(\alpha)$, so that
\begin{equation}\label{eq:boundary_error_exact}
   \mathcal{K}(l)\le  \frac{1}{|\mathcal{D}_\alpha|}\mathbb{E} \int_{\widetilde{\mathcal{B}}_{l}} |u(y)|^3 \, dy =  \frac{|\widetilde{\mathcal{B}}_{l}|}{|\mathcal{D}_\alpha|}  \mathbb{E}\|u\|_{L^3(\widetilde{\mathcal{B}}_{l})}^3,
\end{equation}
where, consistently with \eqref{eq:norms}, the norm
$\|\cdot\|_{L^3(\widetilde{\mathcal{B}}_{l})}$ is normalized by
$|\widetilde{\mathcal{B}}_{l}|$. Since
$|\widetilde{\mathcal{B}}_{l}|/|\mathcal{D}_\alpha| = 2l/H(\alpha)$, we obtain
\begin{equation}\label{eq:error}
    \mathcal{K}(l) \leq   \frac{2l}{H(\alpha)} \mathbb{E}\|u\|_{L^3(\widetilde{\mathcal{B}}_{l})}^3,
\end{equation}
and, by the same argument applied to \eqref{def:D_vorticity_ext},
\begin{equation}\label{eq:error_vortc}
    \mathcal{K}_{\omega}(l) \leq   \frac{2l}{H(\alpha)} \mathbb{E}\left(\|\omega\|_{L^3(\widetilde{\mathcal{B}}_l)}^2\|u\|_{L^3(\widetilde{\mathcal{B}}_l)}\right).
\end{equation}

\begin{cor}\label{cor:channel}
Let $\{u_{\nu,\alpha}\}_{\nu,\alpha>0}$ satisfy the hypotheses of
Theorem~\ref{thm:enstrophy_cascade}, and let $l_\nu$ and $l_\alpha$
be scales satisfying \eqref{eq:dissipation_scale_WAD} and
\eqref{eq:friction_scale_WAD}, respectively. Then:
\begin{enumerate}
    \item[(i)] if
    $\displaystyle \lim_{l_I \to 0}\limsup_{\nu,\alpha \to 0} \sup_{l \in [l_\nu, l_I]} \frac{\mathcal{K}_{\omega}(l)}{l} = 0$,
    then \eqref{eq:cascade_mixed} holds with $\overline{\mathfrak{D}}_{\mathbb{R}}$ replaced by $\overline{\mathfrak{D}}_{\alpha}$;
    \item[(ii)] if
    $\displaystyle \lim_{l_I \to 0}\limsup_{\nu,\alpha \to 0} \sup_{l \in [l_\nu, l_I]} \frac{\mathcal{K}(l)}{l^3} = 0$,
    then \eqref{eq:cascade_velocity} holds with $\overline{D}_{\mathbb{R}}$ replaced by $\overline{D}_{\alpha}$;
    \item[(iii)] if
    $\displaystyle \lim_{l_I \to \infty}\limsup_{\nu,\alpha \to 0} \sup_{l \in [l_I, l_\alpha]} \frac{\mathcal{K}(l)}{l} = 0$,
    then \eqref{eq:cascade_energy_inverse} holds with $\overline{D}_{\mathbb{R}}$ replaced by $\overline{D}_{\alpha}$.
\end{enumerate}
\end{cor}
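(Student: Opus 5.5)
The corollary follows from Theorems \ref{thm:enstrophy_cascade} and \ref{thm:inverse_cascade} together with the triangle inequality. No new analysis is needed, because the hypotheses are stated exactly at the normalization that appears in each cascade law.

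For (iii), fix $l\in[l_I,l_\alpha]$. The definition \eqref{def:K} gives $\overline{D}_\alpha(l) = \overline{D}_{\mathbb{R}}(l) + \big(\overline{D}_\alpha(l)-\overline{D}_{\mathbb{R}}(l)\big)$. Dividing by $l$, we get
\begin{equation*}
\left|\frac{\overline{D}_{\alpha}(l)}{l}-2\varepsilon\right| \le \left|\frac{\overline{D}_{\mathbb{R}}(l)}{l}-2\varepsilon\right| + \frac{\mathcal{K}(l)}{l}.
\end{equation*}
We take $\sup_{l\in[l_I,l_\alpha]}$ on both sides and use subadditivity of the supremum. We then take $\limsup_{\nu,\alpha\to0}$, again subadditive, and finally $\lim_{l_I\to\infty}$. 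The first term tends to zero by \eqref{eq:cascade_energy_inverse} of Theorem \ref{thm:inverse_cascade}, since the same scales $l_\alpha$ satisfying \eqref{eq:friction_scale_WAD} are used. The second term tends to zero by the hypothesis of (iii). Both right-hand quantities are nonnegative, so the limit of the left-hand side exists and equals zero.

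Cases (i) and (ii) follow the same pattern over the range $[l_\nu,l_I]$ with $l_I\to0$. For (i), the triangle inequality gives $|\overline{\mathfrak{D}}_\alpha/l+2\eta|\le|\overline{\mathfrak{D}}_{\mathbb{R}}/l+2\eta|+\mathcal{K}_\omega/l$, and we combine \eqref{eq:cascade_mixed} with the hypothesis of (i). For (ii), we divide instead by $l^3$: $|\overline{D}_\alpha/l^3-\eta/4|\le|\overline{D}_{\mathbb{R}}/l^3-\eta/4|+\mathcal{K}/l^3$, and we combine \eqref{eq:cascade_velocity} with the hypothesis of (ii).

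The only point that needs care is bookkeeping of the iterated limits, $\lim_{l_I}\limsup_{\nu,\alpha}\sup_l$. This works because each operation is monotone and subadditive on nonnegative quantities, and because the dissipation and friction scales in the corollary are the same ones supplied by the theorems. Nothing here depends on the explicit bounds \eqref{eq:error}--\eqref{eq:error_vortc}. Those bounds only indicate when the hypotheses hold, for instance when $H(\alpha)$ grows fast enough relative to $l_\alpha$ and the boundary-layer $L^3$ norms.
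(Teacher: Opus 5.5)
Your proposal is correct and is essentially the paper's own argument: bound $|\overline{D}_{\alpha}(l)/l-2\varepsilon|$ by $|\overline{D}_{\mathbb{R}}(l)/l-2\varepsilon|+\mathcal{K}(l)/l$ via the triangle inequality (with $\mathcal{K}_\omega(l)/l$ and $\mathcal{K}(l)/l^3$ in cases (i) and (ii)), then take the supremum over $l$ and the iterated limits, invoking Theorems \ref{thm:enstrophy_cascade} and \ref{thm:inverse_cascade} for the $\mathbb{T}\times\mathbb{R}$ fluxes. Your remarks on monotonicity and subadditivity of $\sup$ and $\limsup$ for nonnegative quantities spell out bookkeeping the paper leaves implicit, and you rely on the same implicit reading as the paper, namely that the cascade theorems hold for the given scales $l_\nu$, $l_\alpha$ satisfying \eqref{eq:dissipation_scale_WAD} and \eqref{eq:friction_scale_WAD}.
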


\begin{proof}
Each statement follows from the triangle inequality. For instance, in case
(iii),
\begin{equation*}
    \left|\frac{\overline{D}_{\alpha}(l)}{l} - 2\varepsilon\right| \le \left|\frac{\overline{D}_{\mathbb{R}}(l)}{l} - 2\varepsilon\right| + \frac{\mathcal{K}(l)}{l},
\end{equation*}
and one takes the supremum over $l \in [l_I, l_\alpha]$ followed by the limits.
Cases (i) and (ii) are identical, with the normalizations $l$ and $l^3$, respectively.
\end{proof}

\begin{remark}\label{rem:boundary_admissibility}
The hypotheses of Corollary \ref{cor:channel} are quantitative
non-concentration assumptions near the rigid walls. By \eqref{eq:error} and
\eqref{eq:error_vortc}, they are implied by
\begin{align}
    \sup_{l \in [l_\nu, l_I]}\mathbb{E}\left(\|\omega\|_{L^3(\widetilde{\mathcal{B}}_l)}^2 \|u\|_{L^3(\widetilde{\mathcal{B}}_l)}\right) &= o(H(\alpha)),\label{eq:admissibility_condition_small_vorticity}\\
    \sup_{l \in [l_\nu, l_I]} l^{-2}\,\mathbb{E}\|u\|_{L^3(\widetilde{\mathcal{B}}_l)}^3 &= o(H(\alpha)),\label{eq:admissibility_condition_small_velocity}\\
    \sup_{l \in [l_I, l_\alpha]}\mathbb{E}\|u\|_{L^3(\widetilde{\mathcal{B}}_l)}^3 &= o(H(\alpha)),\label{eq:admissibility_condition_large_velocity}
\end{align}
as $\alpha \to 0$, respectively for (i), (ii) and (iii). These conditions are
not purely geometric: they require the meridional width $H(\alpha)$ to grow
fast enough relative to the possible concentration of third-order velocity and
vorticity moments in the strips $\widetilde{\mathcal{B}}_l$, whose relative
thickness is $2l/H(\alpha)$. The additional factor $l^{-2}$ in
\eqref{eq:admissibility_condition_small_velocity} reflects the $l^3$ scaling of
the direct velocity law, and makes it the most demanding of the three.
A slightly stronger but more symmetric sufficient condition follows from Young's
inequality $|\omega|^2|u| \le \tfrac{2}{3}|\omega|^3 + \tfrac{1}{3}|u|^3$,
namely
\begin{equation}\label{eq:admissibility_condition_symmetric}
    \sup_{l \in [l_\nu, l_\alpha]}\mathbb{E}\left( \|u\|_{L^3(\widetilde{\mathcal{B}}_l)}^3 + \|\omega\|_{L^3(\widetilde{\mathcal{B}}_l)}^3 \right) = o(H(\alpha)).
\end{equation}
Finally, we note that in the doubly periodic setting of \cite{bed2D} one has
$\mathcal{K} \equiv \mathcal{K}_\omega \equiv 0$, so that Corollary
\ref{cor:channel} is trivial there and Theorems \ref{thm:enstrophy_cascade} and
\ref{thm:inverse_cascade} reduce exactly to the corresponding statements of
\cite{bed2D}.
\end{remark}

\subsection{Proof of Theorem \ref{thm:antisymmetric_balance}}\label{sec:zonal_dynamics}

In this section, we provide the proof of Theorem \ref{thm:antisymmetric_balance}. To capture the large-scale zonal dynamics of the $\beta$-plane, we depart from the classical isotropic framework. Instead, we derive a novel antisymmetric Kármán-Howarth-Monin balance by using the test function \eqref{eq:test_function}.

\begin{proof}[Proof of Theorem \ref{thm:antisymmetric_balance}]
Let $\mathbb{S}$ be the unit circle and let $y = l n$ be the separation vector, with $l = |y|$ and $n = (n^1, n^2) \in \mathbb{S}$. We introduce the matrix 
\begin{equation*}
    J = \left(\begin{matrix} 0 & 1 \\ -1 & 0 \end{matrix}\right).
\end{equation*}
The Frobenius product of any $A$ with $J$ is given by
\begin{equation}\label{eq:frobenius}
    A : J = \text{Tr}(A^T J) = A_{12} - A_{21}.
\end{equation}
We test the weak KHM identity \eqref{eq:Master_KHM} against the weighted, antisymmetric test function defined in \eqref{eq:test_function}, which can be rewritten as
\begin{equation}
    \phi(y) = \Psi(|y|) \frac{y^2}{|y|} J = \Psi(l) n^2 J,
\end{equation}
where $\Psi(l)$ is a smooth scalar function compactly supported on $(0, \infty)$. Because $J$ is a constant antisymmetric matrix, $\phi(y)$ is an antisymmetric tensor field.

\text{}

\noindent \textbf{Step 1: Non-Linear Term.}\\
The third-order structure tensor $D^j(y)$ defined in \eqref{def:Dj} is symmetric in its components by construction ($\delta_y u \otimes \delta_y u$). Since $\phi(y)$ is antisymmetric, its derivative $\partial_{y^j}\phi(y)$ is also an antisymmetric matrix. The Frobenius product of an antisymmetric matrix and a symmetric matrix is zero. Therefore, the non-linear term vanishes at all scales,
\begin{equation}\label{eq:anti_inertial}
    \sum_{j=1}^2 \partial_{y^j}\phi(y) : D^j(y) = 0.
\end{equation}

\noindent \textbf{Step 2: Viscous and Ekman Dissipation.}\\
We evaluate the antisymmetric projection of the velocity correlation tensor $\Gamma(y)$. Using $u^1 = -\partial_2 \psi$ and $u^2 = \partial_1 \psi$, the projection is
\begin{align*}
    \Gamma_{12}(y) - \Gamma_{21}(y) &= \mathbb{E}\fint_{\mathcal{D}} \big( u^1(x)u^2(x+y) - u^2(x)u^1(x+y) \big)\, dx \\
    &= \mathbb{E}\fint_{\mathcal{D}} \big( -\partial_2\psi(x)\partial_1\psi(x+y) + \partial_1\psi(x)\partial_2\psi(x+y) \big)\, dx.
\end{align*}
We integrate by parts to transfer the derivatives from $x$ to $x+y$. As established in \eqref{eq:trace_psi}-\eqref{eq:coclusionvanish}, the boundary terms vanish, and we obtain 
\begin{equation}
    \Gamma(y) : J = \mathbb{E}\fint_{\mathcal{D}} \psi(x) \big[ \partial_{21}^2 - \partial_{12}^2 \big] \psi(x+y) \, dx = 0.
\end{equation}
Since the Laplacian operator $\Delta$ preserves the antisymmetric matrix structure $J$, both the viscous dissipation $\Delta \phi(y)$ and the Ekman damping $\phi(y)$ are proportional to $J$. Thus, $\Delta \phi : \Gamma = 0$ and $\phi : \Gamma = 0$. Both dissipation mechanisms vanish from the balance.

\text{}

\noindent \textbf{Step 3: Noise Term.}\\
For the noise tensor $a(y)$ defined in \eqref{def:a}, its antisymmetric projection yields
\begin{equation}
    a(y) : J = \frac{1}{2} \sum_j b_j^2 \fint_{\mathcal{D}} \big( e_{j}^{1}(x)e_{j}^{2}(x+y) - e_{j}^2(x)e_{j}^1(x+y) \big) \, dx.
\end{equation}
The orthonormal basis functions $\{e_j\}$ are divergence-free and satisfy the no-slip boundary conditions. By the spatial regularity of the stochastic forcing required in \eqref{eq:noise_high_freq}, we have that $e_j \in H^3(\mathcal{D})$. Thus, there exist scalar stream functions $\psi_j \in H^4(\mathcal{D})$ such that $e_j = (-\partial_2 \psi_j, \partial_1 \psi_j)$. Each $\psi_j$ is extended to $\mathbb T\times\mathbb R$ by its boundary constants on the exterior components, so that $\nabla^\perp\psi_j$ agrees with the zero extension of $e_j$.
Furthermore, the boundary conditions imply that each $\psi_j$ is constant on the boundaries $x^2 \in \{a,b\}$. Therefore, we obtain
\begin{equation}
    a(y) : J = \frac{1}{2} \sum_j b_j^2 \fint_{\mathcal{D}} \big( -\partial_2\psi_j(x)\partial_1\psi_j(x+y) + \partial_1\psi_j(x)\partial_2\psi_j(x+y) \big) \, dx.
\end{equation}
This integral is equivalent to the antisymmetric projection of the velocity correlation evaluated in Step 2. Applying a similar integration by parts, the boundary terms vanish due to the boundary conditions. Thus, the mixed derivatives commute and cancel, yielding
\begin{equation}
    a(y) : J = \frac{1}{2} \sum_j b_j^2 \fint_{\mathcal{D}} \psi_j(x) \big[ \partial_{21}^2 - \partial_{12}^2 \big] \psi_j(x+y) \, dx = 0.
\end{equation}
Therefore, the stochastic forcing injects zero antisymmetric correlation at all separation scales.

\text{}

\noindent \textbf{Step 4: Coriolis and Pressure Terms.}\\
For the Coriolis tensor $\varTheta(y)$ defined in \eqref{def:Theta}, 
evaluating the tensor products and using 
$f(x+y)-f(x)=\beta y^2$, we obtain
\begin{equation}
    \varTheta(y) : J = \frac{1}{2}\beta y^2 \, \mathbb{E} \fint_{\mathcal{D}} \left(
    u^1(x)u^1(x+y) + u^2(x)u^2(x+y)\right)\,dx .
\end{equation}
Equivalently, if $y=ln$ with $n\in\mathbb S$, then
\begin{equation}\label{eq:anti_coriolis}
    \varTheta(ln) : J = \frac{1}{2} \beta l n^2 \, \mathbb{E} \fint_{\mathcal{D}} \left( u^1(x)u^1(x+ln) + u^2(x)u^2(x+ln) \right) dx.
\end{equation}
For the pressure tensor $\Pi(y)$ defined in \eqref{def:Pi}, we evaluate the 
antisymmetric parts of 
$A=\nabla p(x)\otimes u(x+y)$ and 
$B=u(x)\otimes \nabla p(x+y)$. Recalling that 
$u^\perp=(-u^2,u^1)$ and using \eqref{eq:frobenius}, we find
\begin{equation*}
    A_{12}-A_{21}=-\nabla p(x)\cdot u^\perp(x+y),
\end{equation*}
and
\begin{equation*}
    B_{12}-B_{21}=u^\perp(x)\cdot \nabla p(x+y).
\end{equation*}
Therefore,
\begin{equation}
    \Pi(y):J=-\frac12\mathbb{E}\fint_{\mathcal D}\left(\nabla p(x)\cdot u^\perp(x+y)-u^\perp(x)\cdot\nabla p(x+y)\right)dx.
\end{equation}
Equivalently, for $y=ln$,
\begin{equation}\label{eq:anti_pressure}
    \Pi(ln):J=-\frac12 \mathbb{E}\fint_{\mathcal D}\left( \nabla p(x)\cdot u^\perp(x+ln)-u^\perp(x)\cdot\nabla p(x+ln)\right)dx.
\end{equation}

\text{}

\noindent \textbf{Step 5: Final Assembly.}\\
Recalling the coefficients of the weak KHM identity \eqref{eq:Master_KHM}, the remaining terms of the balance reduce to
\begin{equation}
    -4\int_{\mathbb{R}^2} \phi(y):\varTheta(y) \, dy - 4\int_{\mathbb{R}^2} \phi(y):\Pi(y) \, dy  = 0.
\end{equation}
Passing to polar coordinates $dy = l \, dl \, dn$, and factoring out the  test function $\Psi(l)$, we obtain 
\begin{equation}
    \int_0^\infty \Psi(l) l \left[ \fint_{\mathbb{S}} n^2 \big( -4\varTheta(ln):J - 4\Pi(ln):J \big) dn \right] dl = 0.
\end{equation}
Since this balance holds for any smooth, compactly supported test function $\Psi(l)$, the bracket term vanishes pointwise in $l$. Substituting \eqref{eq:anti_coriolis} and \eqref{eq:anti_pressure} into the bracket gives
\begin{align*}
    &-4 \left( \frac{1}{2} \beta l \mathbb{E} \fint_{\mathbb{S}}\fint_{\mathcal{D}} (n^2)^2  (u(x) \cdot u(x+ln)) dx \, dn\right) \\
    &-4 \left( - \frac{1}{2} \, \mathbb{E} \fint_{\mathbb{S}} \fint_{\mathcal{D}} n^2\big( \nabla p(x) \cdot u^\perp(x+ln) - u^\perp(x) \cdot \nabla p(x+ln) \big) dx\,dn\right) = 0.
\end{align*}
Dividing the equation by $-2$ yields the geostrophic identity \eqref{eq:CHS_antisymmetric}, concluding the proof.
\end{proof}

\appendix

\section{Heuristics of Zonal Jets and the Rhines Scale}\label{app:heuristics}

In this Appendix, we connect the exact antisymmetric KHM identity to the classical phenomenological heuristics of two-dimensional geophysical turbulence. For the reader's convenience, let us recall the balance derived in Theorem \ref{thm:antisymmetric_balance}:
\begin{equation}\label{eq:app_balance}
    0 = \beta l \, \mathbb{E}[\mathcal{E}_{zonal}(l)] - \mathbb{E}[\mathcal{P}_{\perp}(l)],
\end{equation}
where
\begin{align}
    \mathcal{E}_{zonal} (l)&= \fint_{\mathbb{S}}\fint_{\mathcal{D}} (n^2)^2 \big( u(x) \cdot u(x+ln) \big) \, dxdn, \label{eq:zonal_energy} \\
    \mathcal{P}_{\perp}(l) &= \fint_{\mathbb{S}}\fint_{\mathcal{D}} n^2 \big( \nabla p(x) \cdot u^\perp(x+ln) - u^\perp(x) \cdot \nabla p(x+ln) \big) \, dxdn.
\end{align}

\paragraph{Small Scales ($l \to 0$): Isotropic Regime.} 
For a small separation $l$, the Coriolis weight $\beta l$ becomes negligible. To preserve the exact identity \eqref{eq:app_balance}, the antisymmetric pressure correlation must vanish. This cancellation physically corresponds to the recovery of local isotropy. The variation of the planetary rotation $\beta$ does not constrain the flow, confirming that the small-scale enstrophy cascade is independent of the Coriolis force, in agreement with our previous result \cite{YuriAmiraliGigliola}.

\paragraph{Large Scales ($l \to \infty$): Emergence of Jets.}
As the inverse cascade transfers energy to larger scales, the parameter $\beta l$ grows, forcing the fluid dynamics to generate a proportional pressure correlation to maintain the statistical balance.

To understand why pressure gradients dominate at large scales, we recall the standard dimensional analysis of the velocity system \eqref{eq:NS_intro} \cite{vallis1}. Let $U$, $L$, and $T$ be the characteristic velocity, spatial, and temporal scales of the flow. Under the mid-latitude assumption $\beta H \ll f_0$, where $H=b-a$ is the channel width, the magnitude of the Coriolis force is dominated by the uniform planetary rotation, scaling as $f(x^2) u^\perp \sim f_0 U$. The non-linear inertial term scales as $(u \cdot \nabla)u \sim U^2/L$, the time derivative as $\partial_t u \sim U/T$, the Ekman drag as $\alpha u \sim \alpha U$, and the viscous dissipation as $\nu \Delta u \sim \nu U/L^2$.  

The ratio of the inertial forces to the Coriolis force is quantified by the non-dimensional \textit{Rossby number}, $Ro = U / (f_0 L)$. 
As energy is transferred to large scales ($L \to \infty$), the Rossby number vanishes ($Ro \to 0$), making the inertial and viscous terms negligible. Furthermore, macroscopic geophysical flows are characterized by slow temporal evolution ($1/(f_0 T) \ll 1$) and weak friction ($\alpha/f_0 \ll 1$). Thus, the inertial, viscous, temporal, and frictional terms all become of lower order. To preserve the momentum balance, the fluid generates a pressure gradient $\nabla p \sim f_0 U$. Formally, taking this limit collapses the Navier-Stokes equations onto the geostrophic balance
\begin{equation}\label{eq:GB_intro}
    (f_0 + \beta x^2) u^\perp \approx -\nabla p.
\end{equation}

\paragraph{The Rhines Scale and the Formation of Zonal Jets.}
While the uniform component $f_0$ dictates the macroscopic geostrophic balance \eqref{eq:GB_intro}, uniform rotation cannot generate internal anisotropy. The antisymmetric KHM identity \eqref{eq:app_balance} translates this physical constraint into a two-point statistical balance. The antisymmetric projection annihilates $f_0$, revealing that the relevant transition is driven exclusively by $\beta$.

To capture this physical transition, we perform a dimensional scaling analysis on the terms of the exact identity \eqref{eq:app_balance}. In the inertial range of the inverse energy cascade, the classical Kolmogorov-Kraichnan laws predict the characteristic velocity fluctuations at scale $l$ to behave as $U_l \sim (\varepsilon l)^{1/3}$ \cite{kolmogorov2, Kraichnan}. Consequently, the macroscopic spatial velocity correlation scales as $\mathbb{E}[\mathcal{E}_{zonal}(l)] \sim U_l^2 \sim (\varepsilon l)^{2/3}$. The Coriolis term in the balance \eqref{eq:app_balance} therefore scales as:
\begin{equation}\label{eq:scaling_coriolis}
    \beta l \cdot \mathbb{E}[\mathcal{E}_{zonal}(l)] \sim \beta l (\varepsilon l)^{2/3} = \beta \varepsilon^{2/3} l^{5/3}.
\end{equation}
Simultaneously, the pressure behaves as a specific energy, scaling quadratically with the velocity $P_l \sim U_l^2 \sim (\varepsilon l)^{2/3}$. The gradient scales as $\nabla p \sim P_l/l$. Therefore, the antisymmetric cross-correlation between the pressure gradient and the orthogonal velocity absorbs the constant energy injection rate $\varepsilon$,
\begin{equation}
    \mathbb{E}[\mathcal{P}_\perp(l)] \sim \frac{P_l}{l} U_l \sim \left(\frac{(\varepsilon l)^{2/3}}{l}\right) (\varepsilon l)^{1/3} = \varepsilon.
\end{equation}
The critical transition defining the onset of anisotropy occurs when the growing $\beta$-term overtakes the inertial pressure-vorticity correlation. Equating the two yields
\begin{equation}\label{eq:balance_beta_eps}
    \beta \varepsilon^{2/3} l^{5/3} \sim \varepsilon \implies l \sim \left( \frac{\varepsilon}{\beta^3} \right)^{1/5} := l_R.
\end{equation}
By substituting $\varepsilon \sim U_l^3 / l$, equation \eqref{eq:balance_beta_eps} simplifies to
\begin{equation}
    l_R \sim \sqrt{\frac{U_l}{\beta}},
\end{equation}
successfully recovering the Rhines scale from the exact antisymmetric KHM balance.

For scales $l \gtrsim l_R$, the geometrical growth of the $\beta$-term dominates the balance. To understand how the system rearranges to compensate for this growth and align the flow into \textit{zonal jets}, we apply the two-dimensional curl ($\nabla^\perp \cdot$) to the macroscopic geostrophic balance \eqref{eq:GB_intro}. This operation annihilates the pressure gradient ($\nabla^\perp \cdot \nabla p = 0$), yielding
\begin{equation}
    \nabla^\perp \cdot (f u^\perp) = \partial_1 (f u^1) + \partial_2 (f u^2) \approx 0.
\end{equation}
By expanding the derivatives and using the incompressibility condition ($\partial_1 u^1 + \partial_2 u^2 = 0$), the contribution of the uniform rotation vanishes ($f_0 \nabla \cdot u = 0$). The only surviving term is the planetary vorticity gradient:
\begin{equation}\label{eq:supp}
    \beta u^2 \approx 0.
\end{equation}
In the absence of a wind stress curl \cite{vallis1}, this unforced geostrophic vorticity balance (representing the \textit{homogeneous Sverdrup relation}) forces the suppression of the meridional velocity fluctuations ($u^2 \approx 0$). By incompressibility, this implies $\partial_1 u^1 \approx 0$, so that the dominant velocity field is expected to be nearly zonal and approximately independent of the zonal coordinate,
\begin{equation}
    u(x) \approx (U(x^2),0).
\end{equation}
Thus the surviving component is the zonal velocity, but its amplitude may vary 
with the meridional coordinate. This is precisely the banded structure of zonal 
jets, and the antisymmetric KHM balance \eqref{eq:app_balance} captures this 
large-scale anisotropic organization.

\section{Non-Triviality of the Antisymmetric Balance}\label{appB} 
To  demonstrate that the antisymmetric KHM identity \eqref{eq:app_balance} is non-trivial, we can evaluate the asymptotic behavior of the Coriolis correlation in the limit of vanishing large-scale friction $\alpha \to 0$. Notice that this regime is the relevant one for isolating large-scale dynamics (see, e.g., Theorem 1.21 of \cite{bed2D}). 

Recall the zonal energy correlation defined in \eqref{eq:zonal_energy},
\begin{equation}
    \beta l \, \mathbb{E}[\mathcal{E}_{zonal}(l)] = \beta l \, \mathbb{E} \fint_{\mathbb{S}}\fint_{\mathcal{D}} (n^2)^2 \big( u(x) \cdot u(x+ln) \big) \, dx \, dn.
\end{equation}
By adding and subtracting $u(x) \cdot u(x)$ inside the inner product, we decompose the correlation as
\begin{equation}
    |u(x)|^2 + u(x) \cdot \delta_{ln}u(x).
\end{equation}
Substituting this decomposition into the integral, and noting that the spatial average of $|u(x)|^2$ is independent of $n$, the integration over the unit circle yields $\fint_{\mathbb{S}} (n^2)^2 dn = 1/2$, we obtain
\begin{equation}\label{eq:app_decomposition}
    \beta l \, \mathbb{E}[\mathcal{E}_{zonal}(l)] = \frac{1}{2}\beta l \, \mathbb{E}\|u\|_{L^2(\mathcal{D})}^2 + \beta l \, \mathbb{E} \fint_{\mathbb{S}}\fint_{\mathcal{D}} (n^2)^2 \big( u(x) \cdot \delta_{ln}u(x) \big) \, dx \, dn.
\end{equation}
We evaluate this expression in the regime relevant for large-scale dynamics. We fix the small-scale viscosity $1\gg\nu > 0$  sufficiently small, our choice of $\nu>0$ becomes clear later. Then, we take the limit $\alpha \to 0$. From the energy balance and enstrophy balance \eqref{eq:energy_balance} \eqref{eq:enstrophy_balance}, and recalling $\|\nabla u \|_{L^2} = \|\omega \|_{L^2}$, we deduce that  the dominant contribution to the kinetic energy is given by $\mathbb{E}\|u\|_{L^2}^2 \sim \varepsilon/\alpha$. Therefore, the first term on the right-hand side of \eqref{eq:app_decomposition} diverges as $\mathcal{O}(\alpha^{-1})$, using the WAD assumption \eqref{eq:WAD_energy} and taking $\nu$ small enough.

For the second term, we apply the Cauchy-Schwarz inequality, along with the bound for the increment $\|\delta_{ln}u\|_{L^2} \le l\|\nabla u\|_{L^2}$. This yields the upper bound:
\begin{align}
    \left| \beta l \, \mathbb{E} \fint_{\mathbb{S}}\fint_{\mathcal{D}} (n^2)^2 \big( u(x) \cdot \delta_{ln}u(x) \big) \, dx \, dn \right| 
    &\le \beta l \, \mathbb{E} \Big[ \|u\|_{L^2(\mathcal{D})} \|\delta_{ln}u\|_{L^2(\mathcal{D})} \Big] \notag \\
    &\le \beta l^2 \big(\mathbb{E}\|u\|_{L^2(\mathcal{D})}^2\big)^{1/2} \big(\mathbb{E}\|\nabla u\|_{L^2(\mathcal{D})}^2\big)^{1/2}.
\end{align}
From the energy balance \eqref{eq:energy_balance}, the gradient is uniformly bounded by $\mathbb{E}\|\nabla u\|_{L^2}^2 \le \varepsilon/\nu$. Combining this with the energy bound, the mixed term is bounded by $\mathcal{O}(\alpha^{-1/2})$.

As $\alpha \to 0$, the $\mathcal{O}(\alpha^{-1})$ growth dominates the $\mathcal{O}(\alpha^{-1/2})$ bound of the increment. Therefore, for any fixed $l > 0$, the zonal energy correlation behaves as
\begin{equation}
    \beta l \, \mathbb{E}[\mathcal{E}_{zonal}(l)] \sim \frac{\varepsilon \beta l}{2\alpha}.
\end{equation}
This proves that the identity \eqref{eq:app_balance} cannot trivially reduce to $0=0$. The growth of the Coriolis term forces an equivalent $\mathcal{O}(\alpha^{-1})$ growth in the cross-correlation of the pressure gradient, demonstrating how planetary rotation reorganizes the large-scale flow as already explained in Appendix \ref{app:heuristics}.

\vspace{8mm}

\textbf{Funding.}  A.H. is supported by the ERC Consolidator Grant ProbQuant (jointly with the Swiss State
Secretariat for Education, Research and Innovation).

G.S. is funded in part by the NSF grant DMS-2306378 and the Simons Foundation through the Simons Collaboration on Wave Turbulence.

\textbf{Conflict of interest.} The authors declare that they have no conflict of interest.

\textbf{Data availability.} Data sharing is not applicable.
We do not analyse or generate any datasets, because our work proceeds within a theoretical and mathematical approach.

\bibliographystyle{plain}
\bibliography{bib}  

\end{document}